\documentclass[twocolumn, prb, nofootinbib, superscriptaddress]{revtex4}
\usepackage[T1]{fontenc}
\usepackage{graphicx}
\usepackage{amsmath}
\usepackage{amsfonts}
\usepackage{amssymb}
\usepackage{tikz}
\usepackage{amsthm}
\usepackage{physics}
\usepackage{xcolor}
\usepackage{algorithm}
\usepackage{algpseudocode}
\usepackage{hyperref}
\usepackage{soul}

\newtheorem{theorem}{Theorem}[section]

\newtheorem{lemma}[theorem]{Lemma}
\theoremstyle{definition}
\newtheorem{definition}{Definition}[section]

\newcommand{\X}{X}
\newcommand{\Z}{Z}

\newcommand{\x}{\sigma^X}
\newcommand{\z}{\sigma^Z}

\newcommand{\C}{K}

\begin{document}

\title{Quantum computing with anyons is fault tolerant}

\author{Anasuya Lyons}

\affiliation{Department of Physics, Harvard University, Cambridge, Massachusetts 02138, USA}

\affiliation{IBM Quantum, T. J. Watson Research Center, Yorktown Heights, New York 10598, USA}

\author{Benjamin J. Brown}

\affiliation{IBM Quantum, T. J. Watson Research Center, Yorktown Heights, New York 10598, USA}
\affiliation{IBM Denmark, Sundkrogsgade 11, 2100 Copenhagen, Denmark}

\begin{abstract}
    In seminal work~\cite{Kitaev2003} Alexei Kitaev proposed topological quantum computing~\cite{Kitaev01, Kitaev2003, Freedman2002, Nayak2008}, whereby logic gates of a quantum computer are conducted by creating, braiding and fusing anyonic particles on a two-dimensional plane. Furthermore, he showed the proposal is inherently robust to local perturbations~\cite{Kitaev01, Kitaev2003, Bravyi2010, Bravyi2011} when anyons are created as quasiparticle excitations of a topologically ordered lattice model prepared at zero temperature. Over the decades following this proposal there have been considerable technological developments towards the construction of a fault-tolerant quantum computer. Rather than maintaining some target ground state at zero temperature, a modern approach is to actively correct the errors a target state experiences, where we use noisy quantum circuit elements to identify and subsequently correct for deviations from the ideal state.
    We present an error-correction scheme that enables us to carry out robust universal quantum computation by braiding anyons. We show that our scheme can be carried out on a suitably large device with an arbitrarily small failure rate assuming circuit elements are below some threshold level of local noise. The error-corrected scheme we have developed therefore enables us to carry out fault-tolerant topological quantum computation using modern quantum hardware that is now under development.
\end{abstract}

\maketitle

\noindent {\it Introduction.-} Topological quantum computing~\cite{Kitaev01,Kitaev2003, Freedman2002, Nayak2008} was among the first ideas put forward to process encoded quantum information using imperfect hardware.
With experimental demonstrations of topological codes in recent years, including various quantum memories based on topological codes~\cite{google2021, google2023, Bluvstein_2023, Iqbal2024, Iqbal_2025}, as well as state preparation and anyon braiding with more exotic topological phases~\cite{iqbal-nonAbelian2024, Minev_2025, lo2026universaltopological}, one can imagine that an experimental demonstration of fault-tolerant quantum computation by braiding is on the horizon. Here, we show how to carry out error correction throughout a quantum computation by a universal set of anyon braiding and fusion measurements in a topological phase ~\cite{SU24,mochon_smaller_groups_2004, kitaev_finite_group_2007, cui_universal_2015, lo_universal_2025}. Furthermore, we prove that its failure rate vanishes on a sufficiently large device. 

Topologically ordered models suitable for quantum computing via anyon braiding and fusion will produce anyons that are able to absorb other particles. This key feature enables us to protect quantum information, as information about the particle types that are absorbed by logical anyons cannot be accessed by local operations while the anyons are separated over long distances. As such they form logical degrees of freedom robust to local noise. However, a system capable of producing anyons that absorb other particle types presents new challenges not common to conventional `stabilizer' error correction~\cite{gottesman1997stabilizer}. 
This is precisely because errors that create anyons can hide the location of other particles, obfuscating the syndrome data of the true error configuration needed to diagnose and correct errors. If these anyons are left untreated it may be impossible to correct certain common error configurations due to the proliferation of many hidden excitations. It is therefore important to rapidly identify and annihilate erroneous anyons. On the other hand, the apparatus used to measure the locations of erroneous anyons may be unreliable. We therefore require a `fault-tolerant' scheme for error correction that is robust to false measurement reading due to circuit errors. A fault-tolerant scheme for error correction must proceed with some care when decisions on how to correct anyons to avoid introducing additional physical errors due to false readout information.

\begin{figure}
\includegraphics[width=1\columnwidth]{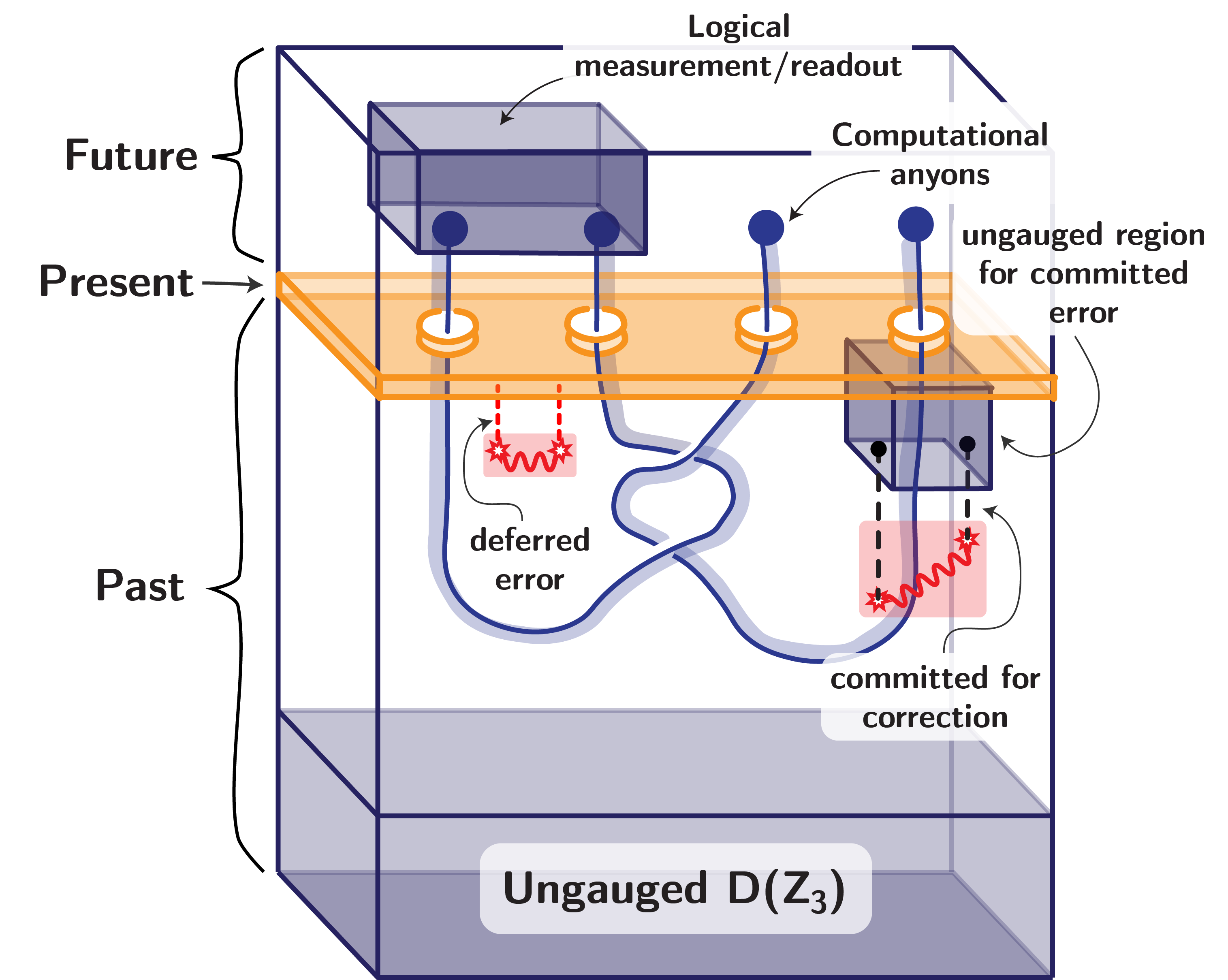}
\caption{Quantum computing with anyons displayed in spacetime where time runs upwards. The $D(S_3)$ phase is prepared by gauging an initial $D(\mathbb{Z}_3)$ state. Computational anyons (blue circles) are then prepared and braided while maintaining a very large separation. Errors, shown in red, acting on the topologically ordered lattice can be understood as string-like objects with erroneous anyons at their end points. Physical errors generate strings that lie perpendicular to the temporal direction, while measurement error strings run parallel to the time axis. When we are below the threshold error rate, errors form well-resolved clusters that, with high probability, are much smaller than the separation between computational anyons. We must correct errors in real time such that erroneous anyons cannot hide any important syndrome information. The just-in-time decoder takes syndrome information measured in the past and present timesteps to decide what correction to perform at each stage. When the decoder is confident a group of detection events should be mutually annihilated, the just-in-time-decoder commits to a correction. To make a correction, we ungauge the region supporting the anyons to reveal their collective fusion outcome and subsequently anihilate the erroneous anyons in the ungauged lattice before recovering the $D(S_3)$ phase. If the decoder is uncertain on how to correct errors, perhaps because they were detected recently, the correction can be deferred, as shown in red. Here, we correct the event with a measurement at the present time, moving the detection into the future. This allows us to delay correction. Ungauging can also be used to perform logical readout as shown at the top of the figure.
}
\label{fig:spacetime}
\end{figure}

We show how to conduct fault-tolerant error correction throughout a quantum computation by anyon braiding. We propose a just-in-time decoder~\cite{bombin20182dquantumcomputation3d, Brown_2020, davydova_universal_2025} running in concert with gauging operations~\cite{tanti2023hierarchy,verresen2022efficientlypreparingschrodingerscat, bravyi2022adaptiveconstantdepthcircuitsmanipulating,Lyons_2025} to perform reliable error-correction on a topological phase possessing a universal gate set through anyon braiding and fusion. The just-in-time decoder is an algorithm that assesses syndrome data in real time to make quick and confident decisions about how to correct erroneous anyons. At the same time, it defers some decisions until a later point to avoid catastrophic problems caused by false measurement readings. Once the decoder commits to performing a correction, we use gauging to correct errors. Gauging transforms a local region of our topologically ordered lattice, where we suspect an error to have occurred, into another type of topological code using local measurements and a constant-depth local unitary circuit. In this alternative phase, anyons can no longer absorb other anyons, and we can perform error correction operations using local rotations on individual degrees of freedom. Once we are satisfied that all errors are corrected have been corrected, we recover the original topological code used for computation with another gauging operation. All together, our scheme enables us to show that universal topological quantum computing can be accomplished with a vanishingly small error rate on a noisy quantum device (see Fig. \ref{fig:spacetime}). Earlier work towards the demonstration of anyon error correction has concentrated on numerical studies and analytical results for memories with non-universal anyon models~\cite{Wootton_2014,brell_thermalization_2014,wootton_active_2016, Dauphinais2017, schotte_fault-tolerant_2022,sala_decoherence_2024, davydova_universal_2025}. In Methods, we exemplify our result in microscopic detail using Kitaev's quantum double model $D(S_3)$, known to be universal for quantum computing by braiding and topological fusion measurements \cite{SU24,mochon_smaller_groups_2004, kitaev_finite_group_2007, cui_universal_2015, lo_universal_2025}.

\bigskip

\noindent {\it Topological quantum computing.-} We consider a two-dimensional lattice of spins that support anyonic excitations~\cite{Kitaev2003, Levin_2005}. Anyons~\cite{leinaas-myrheim, wilczek1982, goldin1985, Frohlich1990, wen1991, Nayak2008} are point-like quasiparticles created at the end points of operators with string-like support. We can also use string operators to move anyons. By moving two anyons close together, they combine in a fusion operation to produce a third anyon. In general, the product of a fusion operation is not deterministic. We write the fusion product of particle types labeled $a$ and $b$ as $a\times b = c + d + \dots$ where $c$, $d$ and so forth on the right-hand side of the equation denote the particle types that may appear as the outcome of the fusion operation.

Groups of anyons with multiple fusion outcomes give rise to a non-trivial Hilbert space that can be used for topological quantum computing. Given that anyons have to be moved close together to read their fusion outcome, by keeping anyons well separated the Hilbert space is inaccessible to local interactions and therefore remains protected. Additionally the protected Hilbert space can be manipulated by braiding anyons around one another. A topological quantum computation~\cite{Kitaev2003, Freedman2002} proceeds by first preparing computational anyons, separating them to encode information robustly, and then braiding them while maintaining their separation. The outcome of the computation is read out by fusion measurement (see Fig. \ref{fig:spacetime}). In general we can consider feedforward operations where a braiding or fusion operation in the future is chosen depending on the outcome of an earlier fusion measurement.

\bigskip

\noindent {\it Errors in general anyon models.-} Let us briefly discuss the challenges of correcting the errors of general anyon models before considering syndrome extraction using noisy circuit elements. We measure stabilizer operators to detect errors. Stabilizers project onto a configuration of anyonic excitations on the lattice whereby local errors are projected onto short string operators with anyonic excitations at their end points. Let us refer to the anyons created by errors as erroneous anyons, to distinguish them from computational anyons that are created deliberately to run a calculation. A configuration of erroneous anyons is commonly known as an error syndrome--- we illustrate example error syndromes that may occur in Fig.~\ref{Fig:LogicalError}. We will make frequent use of the following property to conduct error correction: given a region that was prepared to support a computational anyon, labeled $c$ and has experienced a local error that has produced a configuration of erroneous anyons $b_1$, $b_2$,\dots,$b_N$, we must have $ b_1 \times b_2 \times \dots \times b_N = c $ (fusion is associative).

\begin{figure}
\includegraphics[width=\columnwidth]{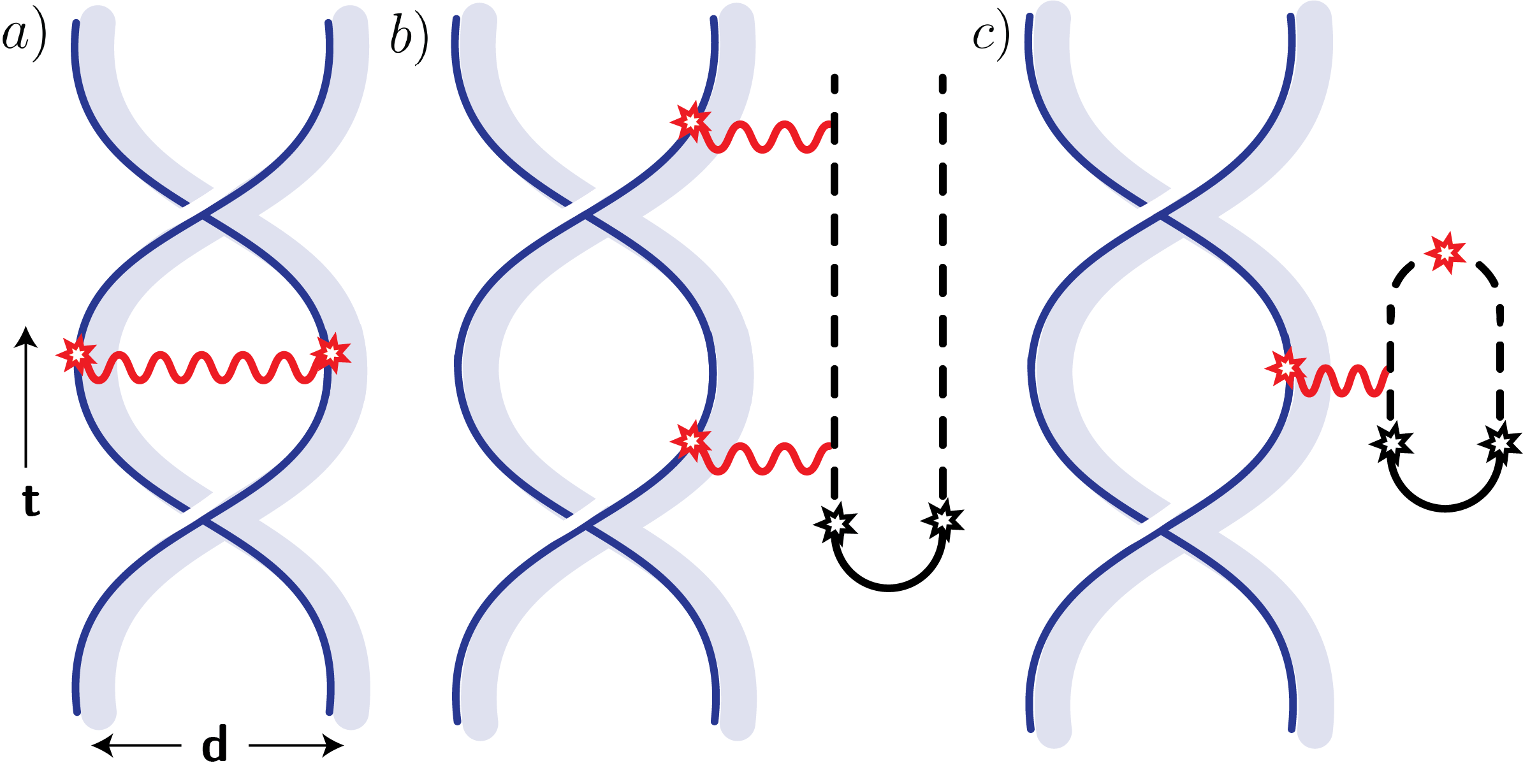}

\caption{\label{Fig:LogicalError} a) Part of a computation by braiding. Computational anyons are separated from each other by at least $d$ in all directions; erroneous anyons must propagate between computational anyons via a sequence of $O(d)$ local errors to introduce a logical error. b) A low-weight local error may cause a logical error over time. A small error first creates a pair of anyons, shown in black, nearby a computational anyon. A second local error (red) propagates an anyon between the computational anyon and an erroneous anyon, such that the erroneous anyon hides the syndrome of the second error. If the first pair of erroneous anyons are not annihilated, a subsequent third small local error can propagate an anyon from the erroneous anyon pair and a second computational anyon (moved during the course of braiding), introducing a logical error. We conclude that we must promptly annihilate erroneous anyons. c) If the erroneous pair is annihilated quickly, the charge exchanged with the computational anyon will be revealed, therefore enabling us to suppress the logical error.}
\end{figure}

\bigskip

\noindent {\it Detectors.-} We measure stabilizers using additional degrees of freedom and noisy circuit elements. We suppose that we can measure each stabilizer to detect any given species of anyon at a given position once per unit time.
In general, we expect circuit errors may give rise to false readings when we perform measurements. To deal with such errors, we measure stabilizers repeatedly over time and present syndrome information in a $2+1$-dimensional spacetime. By replacing stabilizer measurements $S_i$ with detectors $D_i(t)$, we can generalize the idealized $2d$ picture where erroneous anyons appear at the ends of error strings to this spacetime picture.

We make stabilizer measurements $S_i(t)$ of a generating set of operators $\{S_i\}$ over discrete time steps $t$ to measure the presence or absence of a certain excitation type at a particular location, indexed by $i$, of the two-dimensional lattice. Assuming that $S_i(t)$ may give a false outcome due to a measurement error, we instead use detectors $ D_i(t) := S_i(t) - S_i(t-1) $ for all generators $S_i$ and discrete timesteps $t$ to furnish our spacetime volume with syndrome data. Essentially, a detector compares the current reading of a stabilizer to its previous reading, and so represents changes in the stabilizer outcome.

Suppose a physical error introduced to the system at $t = T-1/2$ creates excitations detected by stabilizers $S_j$ and $S_k$. Then $D_j(T)$ and $D_k(T)$ will both register violations at time $T$. Assuming no additional errors occur, later readings of detectors $D_j(t > T)$ and $D_k(t > T)$ will not register further events. The error can then be interpreted as a string in spacetime extending along the spatial direction and can be corrected by fusing the two erroneous anyons measured by stabilizers $S_j$ and $S_k$.

Next, suppose measurement $S_i(t)$ gives a false reading: detectors $D_i(t)$ and $D_i(t+1)$ will both register detection events. This measurement error now looks like a string-like object that runs in the time direction, with detection events at its endpoints. Given that the state of the lattice has not changed, we can reverse this error simply by updating the classical readout of the stabilizer measurement. One can check that physical errors and measurement errors will compound to make longer string-like objects in the spacetime model. As a final remark on this topic we note that detectors are readily modified to account for the presence of a computational anyons. We simply adjust their value according to our knowledge of the locations of computational anyons.

\bigskip

\noindent {\it Decoding.-} Error correction with general anyon models introduces new challenges we do not see in more conventional error-correction schemes. Notably the presence of erroneous anyons can hide the locations of other excitations on the lattice. It is then important to annihilate anyons that may be hiding additional syndrome information. We therefore apply correction operations rapidly in real time, even before we have collected all of the syndrome data over the course of an entire computation. On the other hand, at the level of circuit noise, we can expect that anyons will be misidentified due to readout errors. We must be cautious not to fuse pairs of erroneous anyons too hastily, as measurement errors may cause us to mistakenly introduce new errors of our own making.

To remedy this issue we employ just-in-time decoding to correct errors that occur throughout a topological quantum computation. This is a decoder that takes as input the values of all detectors $D_i(t)$ at each timestep, and maintains the syndrome history of all detection events up until the present time. At each timestep the just-in-time decoder makes a decision on how to correct each detection event. It can make one of two actions for each detection event; it may either {\it commit} to a correction or {\it defer} its decision. When the decoder is confident in a group of detection events, it decides to commit and correct the corresponding excitations. Generically, if the decoder does not commit to a correction, a detection event is deferred. It is assumed that a measurement error occurred at the most recent timestep at the location where the excitation was detected. Reversing the most recent measurement effectively moves the detection event to a later timestep, and the decoder can commit this detection to a correction at a later timestep.

The just-in-time decoder is completely specified once we describe the criterion under which we are confident enough to commit to a correction. We formalize this definition in Methods but, in short, we commit to correcting a group of anyons once they have all existed for a time equal to the maximum separation between their initial spacetime detections. Given a local circuit noise model, this condition prevents local errors from combining in such a way that they become unmanageably large, while also ensuring the decoder corrects errors sufficiently quickly so anyons do not survive long enough to obscure a detrimental portion of the spacetime syndrome. 

\bigskip

\begin{figure}

\includegraphics[width=\columnwidth]{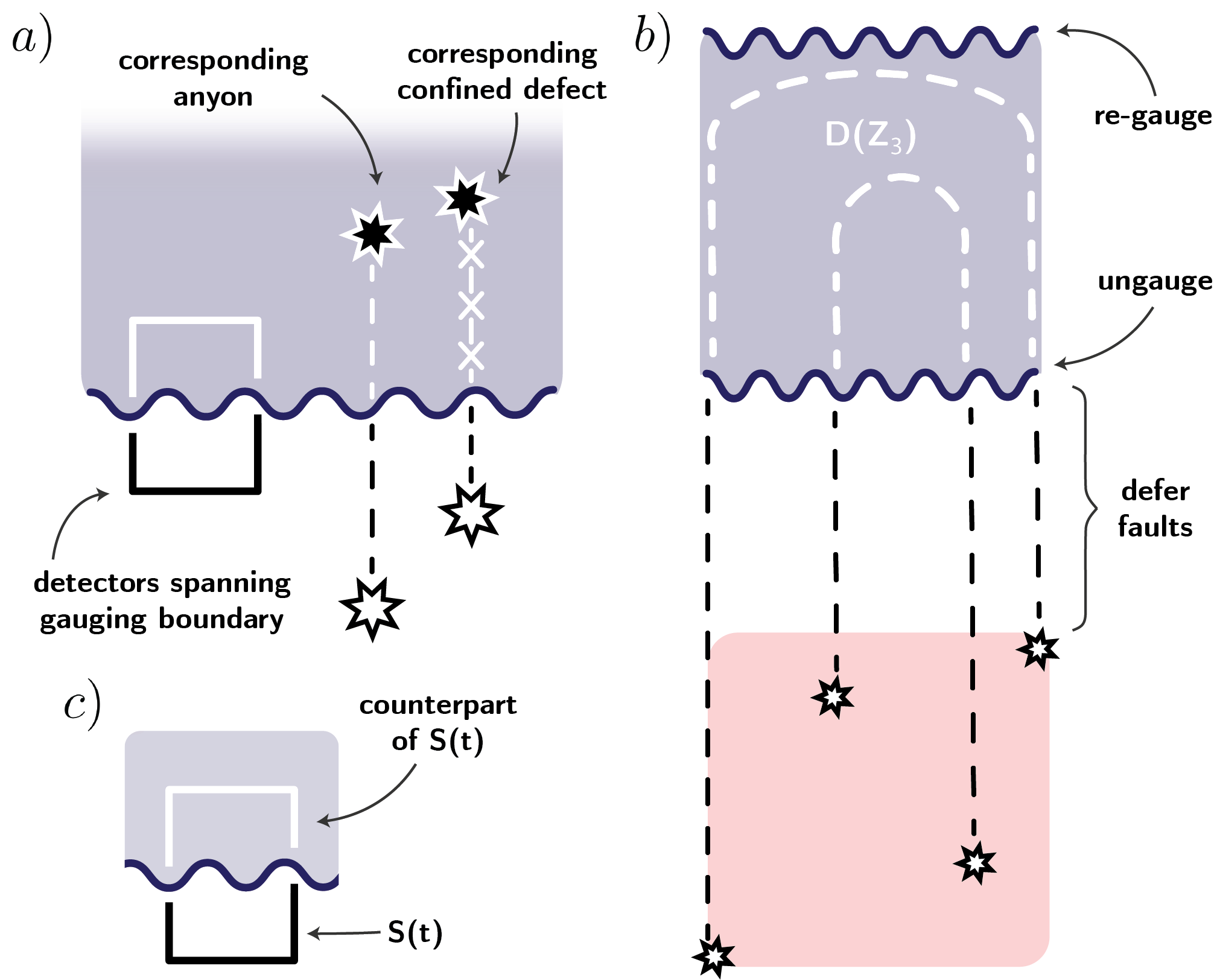}

\caption{a) Gauging and ungauging provide a dictionary between two codes, where each anyon has a counterpart. This allows us to construct meaningful detectors spanning the boundary. b) An error, contained in a local spacetime cluster (red) creates detection events (stars). The erroneous anyons are deferred for correction until a timestep when the decoder is confident to commit to a correction. To correct the error, we must use the properties of the anyon model. Local $D(S_3)$ stabilizer measurements alone cannot determine the fusion result of a configuration of erroneous anyons, as this is nonlocal information. Instead, the region is then ungauged, converting the nonlocal information into local data. After neutrality is evaluated, the $D(\mathbb{Z}_3)$ counterparts of the faults are fused. Finally, we gauge the lattice again to recover the $D(S_3)$ phase. c) These detectors are formed by taking pairs of stabilizers and their $D(\mathbb{Z}_3)$ counterparts in the ungauged phase.}
\label{Fig:gauging-domain-wall}
\end{figure}

\noindent {\it Error-correction by gauging.-} We now consider the specific actions the decoder must take upon committing to correcting a set of errors. We find that gauging and ungauging provide a practical scheme for performing these corrections. Gauging is a physical mapping between two distinct topologically ordered phases that respects certain consistency conditions between the anyonic excitations of the two phases. These consistency conditions enable excitations to pass through the domain wall between the two phases, though the particles may undergo some transformation (see Fig. \ref{Fig:gauging-domain-wall}a). Gauging has been considered previously for preparing topological phases \cite{verresen2022efficientlypreparingschrodingerscat, bravyi2022adaptiveconstantdepthcircuitsmanipulating, tanti2023hierarchy, lu_measurement_2022, li2023measuringtopologicalfieldtheories, Sukeno_2024} and for creating anyons \cite{bravyi2022adaptiveconstantdepthcircuitsmanipulating,Lyons_2025, ren_efficient_2025}. Here we use gauging to performing corrections. Important for error correction is the fact that the gauging map preserves the consistency of fusion outcomes between local collections of anyons: if a group of anyons fuses to $a$ in one phase, their gauged counterparts will fuse to the gauged counterpart of $a$ in the other phase. This enables us to read out the error syndrome in either phase.

We perform corrections by mapping the quantum double $D(S_3)$ onto the qutrit toric code $D(\mathbb{Z}_3)$. As $D(\mathbb{Z}_3)$ is a stabilizer code, its anyons have deterministic fusion outcomes. This means the fusion outcome of any group of anyons in the qutrit toric code can be determined from local stabilizer measurements. It also implies corrections take the form of low-depth circuit operations on local degrees of freedom. Once we commit to fusing a group of detection events in the spacetime syndrome history, we ungauge a small disk-shaped region containing all of the corresponding anyons that are committed to be fused together. This maps the $D(S_3)$ anyons in the region onto their counterparts in the qutrit toric code model.

In this ungauged region, we can now determine if the anyons on the region will fuse to a neutral configuration or, more generally, the species of computational anyon $c$ that was designated to appear on this region. The ungauging operation determines the anyon type on the region by revealing new information about the erroneous anyons. Essentially, the nonlocal information about the fusion result of the erroneous anyons has been converted into local information by the ungauging process, which can then be used to inform the decoder. If the fusion measurement reveals the expected computational anyon then we can proceed to prepare the $D(S_3)$ phase again having neutralized the error (Fig. \ref{Fig:gauging-domain-wall}b). Otherwise, we leave the region ungauged and expand the search, looking for other nearby erroneous anyons to merge into the ungauged region. The consistency conditions between the two phases of the ungauging map mean that we can produce detectors that straddle the domain wall between the two phases, such that we can consistently correct errors throughout the mapping (Fig. \ref{Fig:gauging-domain-wall}c). We ensure that the correction is robust to measurement errors by remaining in the ungauged phase for an extended period of time. After the decoder has cleaned up a given patch, we can recover $D(S_3)$ in the region by re-gauging.

\bigskip

\noindent {\it Statement of result.-} With our system and strategy for real-time error correction described, we can now state our result (proven in Methods). Assuming a local circuit noise model with sufficiently low rate of noise, we demonstrate that a universal quantum computation can be completed with a vanishingly small error rate by braiding and fusing anyons in the quantum double model. The local error model allows us to assume a cluster decomposition where local groups of errors are well resolved from all other errors with respect to their own length scale. We argue that our adaptive error correction strategy, using just-in-time decoding and ungauging to test neutrality, obtains corrections for local errors sufficiently quickly that an error cluster is not spread excessively far from its initial location in spacetime by further errors. Using the fact that very large error clusters are suppressed given a sufficiently low error rate, we find that the initial error plus the correction will not affect a quantum computation assuming the computational anyons involved in the calculation remain sufficiently well separated.  

\bigskip

\noindent {\it Discussion.-} We have shown that universal quantum computing with anyons is fault tolerant. Our results required the development of an error-correction scheme for the quantum double model $D(S_3)$ including a just-in-time decoder and a method for active correction using gauging operations. The development of our scheme is particularly timely given the recent demonstration of logic gates using this phase on a quantum device~\cite{lo2026universaltopological}. Our scheme can be utilized with these logic gates to show that their logical failure rate can be suppressed arbitrarily.

The field of quantum error correction seeks practical schemes for fault-tolerant quantum computing that operate with a relatively low overhead. Our error-correction scheme gives us a framework to design and implement numerical simulations to measure the resource cost of quantum computing with anyons. Such results will enable us to compare the resource cost of quantum computing with anyons to other two-dimensional fault-tolerant quantum computing schemes such as surface-code quantum computation supplemented by magic-state distillation~\cite{dennis_topological_2002, Bravyi05universal, fowler_surface_2012}, as well as more recent proposals for topological quantum computing that do not require braiding~\cite{bombin20182dquantumcomputation3d, Brown_2020, huang2025generatinglogicalmagicstates, davydova_universal_2025, sajith2025noncliffordgatesstabilizercodes}.

In this regard it might also be interesting to generalize our decoding scheme for other universal anyon models that are practical for realization with physical hardware. We expect that our method will generalize readily to other `solvable' anyon models that can be prepared via gauging \cite{tanti2023hierarchy, ren_efficient_2025}, as this will mean that our error-correction scheme using ungauging can be exploited. Beyond solvable anyon models, one might also consider generalizing our result to other anyon models where correction via gauging is not possible. We expect a correction can be obtained with the development of a new error-correction method where erroneous anyons are fused directly using ribbon operators or perhaps using anyonic interferometry~\cite{BONDERSON2008interferometry}. We leave it to future work to determine if other such schemes can be made fault tolerant.

Looking beyond two dimensions, there is currently a significant movement to design fault-tolerant quantum-computing architectures with low-density parity-check codes to reduce the resource overhead of quantum computing~\cite{Gottesman2014, Breuckmann2021quantum, Cohen2022, Bravyi2024high, Yoder2025tour}. A major challenge with such architectures is the realization of addressable non-Clifford gates acting on encoded quantum information. A recent proposal to deal with this issue suggests using code deformations, c.f. gauging, onto generalized `Clifford' stabilizer codes to prepare magic states~\cite{davydova_universal_2025}---see recent work~\cite{zhu2026non}. Indeed, in the search of quantum computing architectures with a high encoding rate and a novel logical gate set, one may even consider non-local generalizations of the quantum double of $D(S_3)$ by gauging a qutrit low-density parity check code. The methods we have developed here for codes in two dimensions can serve as a basis for the design of error-correction schemes for non-stabilizer generalizations of quantum low-density parity-check codes.

\bigskip

\noindent {\it Acknowledgments.-} We are grateful for helpful conversations with A. Bauer, M. Davydova, J. Magdalena de la Fuente, E. Lake, Y. Ren, and D.J. Williamson. BJB is grateful for the hospitality of the Center for Quantum Devices at the University of Copenhagen. AL acknowledges the NSF GRFP (Grant No. DGE 2140743) and the Kavli Institute for Theoretical Physics Graduate Fellowship Program, supported in part by Grant No. NSF PHY-1748958 to the Kavli Institute for Theoretical Physics (KITP), the Heising-Simons Foundation, and the Simons Foundation (216179, LB).

\bigskip

\noindent{\it Author Contributions.-} AL and BJB devised the research programme that led to the present results. AL produced the proof for the main theorem of this work. AL and BJB prepared the manuscript together.

\bigskip

\noindent{\it Competing Interests.-} No competing interests to declare.

\bigskip

\noindent{\it Materials and Correspondence.-} Correspondence and requests for materials should be addressed to \href{mailto:anasuya\_lyons@g.harvard.edu}{anasuya\_lyons@g.harvard.edu}.

\bibstyle{plain}
\bibliography{references}

\clearpage

\section*{Methods}

\renewcommand{\thetheorem}{\arabic{theorem}}
\renewcommand{\thedefinition}{\arabic{definition}}

\subsection{Quantum Double of $S_3$}

\begin{figure}
    \centering
    \includegraphics[width=0.85\linewidth]{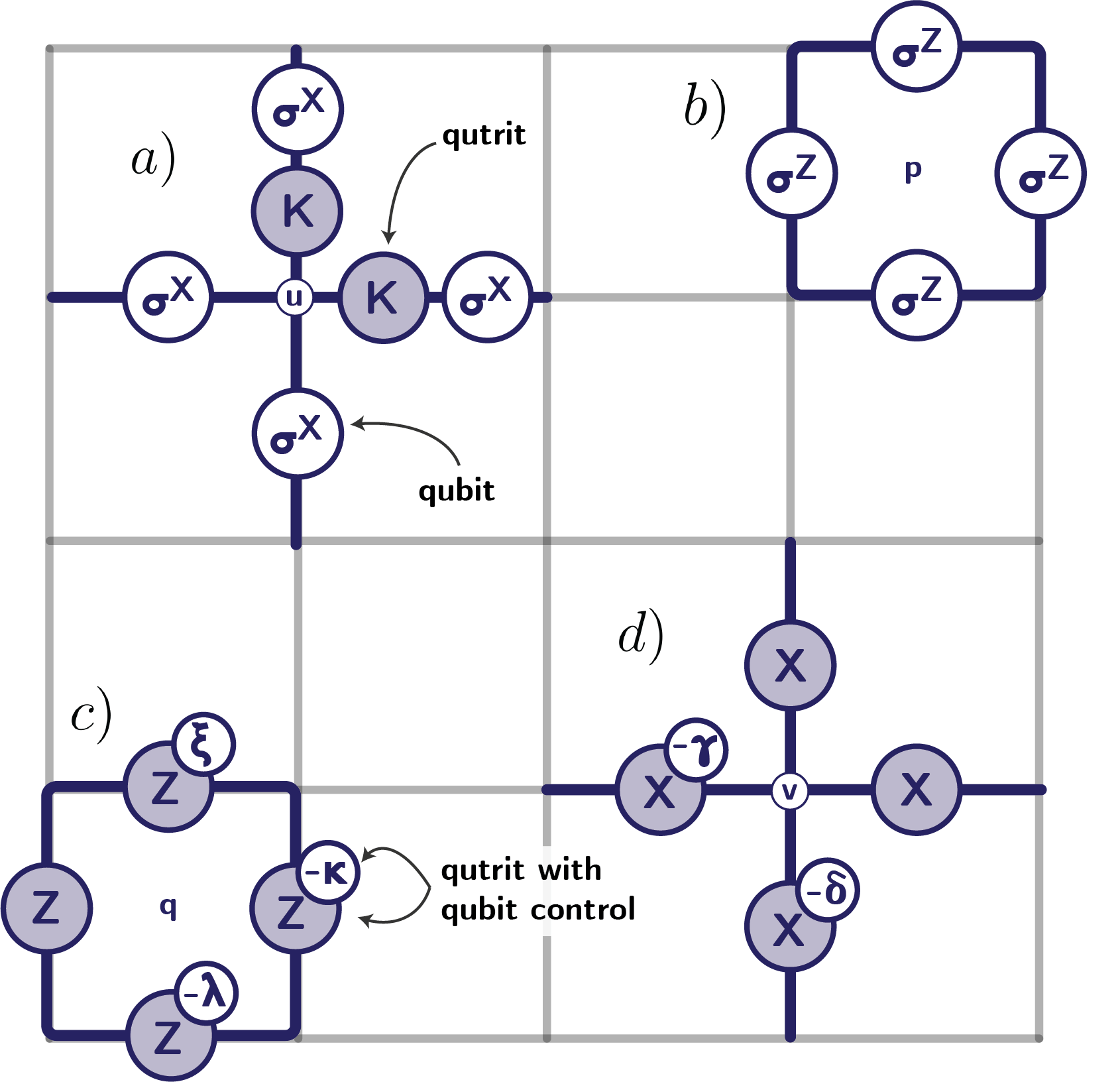}
    \caption{$D(S_3)$ lives on the lattice, with a local Hilbert space consisting of a qubit (unfilled circles) and a qutrit (shaded circles). A shaded circle with an unfilled circle above it indicates a qubit-control qutrit-target gate. a) Qubit vertex stabilizer $\alpha_u$ and b) qubit plaquette stabilizer $\beta_p$. c) Qutrit plaquette stabilizer $B_q$ and d) qutrit vertex stabilizer $A_v$. }
    \label{fig:stabilizers}
\end{figure}

We define the quantum double $D(S_3)$ based on the microscopic description given in Appendix~D of Ref.~\onlinecite{verresen2022efficientlypreparingschrodingerscat}, see also~\onlinecite{Brennen_2009, bravyi2022adaptiveconstantdepthcircuitsmanipulating}. Each edge supports a degree of freedom comprised of a three-level system and a two-level system, spanned by the states $| \ell , m \rangle $ with $\ell = 0,1,2$ and $m = 0,1$ (i.e. the tensor product of a qutrit and a qubit). We will follow a convention where capitalized Roman letters denote operators acting on the qutrit subsystem and Greek letters denote operators acting on the qubit subsystem. In general, operators acting on the qubit subsystem will commute with operators acting on the qutrit subsystem. Let us write generalized Pauli operators acting on the qutrit and qubit subsystems $\X$, $\Z$, and $\x$, $\z $, respectively, such that
\begin{equation}
\X | \ell , m \rangle = | (\ell + 1) \textrm{mod } 3 , m \rangle, \quad \Z | \ell ,  m \rangle = \omega^\ell | \ell  ,  m \rangle,
\end{equation}
where $\omega = \exp(2\pi i / 3)$ is the third root of unity and
\begin{equation}
\x | \ell , m \rangle = | \ell , (m+1) \textrm{mod } 2  \rangle ,\quad \z | \ell ,   m \rangle = (-1)^{m} | \ell  ,  m \rangle.
\end{equation}
We also have that $\Z \X = \omega \X \Z$, and $\z \x = - \x \z$. It is also helpful to note that $\X^3 = \Z^3 = ({\x})^2 = ({\z})^2 = 1$, and that $P^\dagger = P^{-1} = P^2$ for $P = \X,\Z$.

We also define the charge conjugation operator $\C$ supported on the qutrit subsystem. This hermitian operator satisfies the following relations under conjugation:
\begin{equation}
\C \X \C^\dagger = \X^\dagger,\quad \C \Z \C^\dagger = \Z^\dagger.
\end{equation}
The conjugation operator $\C$ commutes with operators acting on the qubit subsystem.

With the definitions above we can now write down the stabilizers of the quantum double $D(S_3)$. It can be helpful to view these stabilizers in terms of vertex and plaquette operators of disjoint codes on the qubit and qutrit subsystem, with decoration terms that couple the qubit and qutrit subsystem. We show the decorated vertex and plaquette terms for the qubit subsystem in Fig.~\ref{fig:stabilizers}(a) and~(b), denoted $\alpha_u$ and $\beta_p$, respectively. We have
\begin{equation}
\alpha_u =\C_{(u,N)} \x_{(u,N)} \C_{(u,E)}  \x_{(u,E)} \x_{(u,S)} \x_{(u,W)}, 
\end{equation}
where edges are indexed $e=(u, R)$ where $R = N, E,S,W$ denotes the edge $e$ incident to $u$ pointing in the north $ N$, east $ E$, south $S$ or west $W$ direction relative to $u$. Likewise we have plaquette operators supported on the qubit subsystem only:
\begin{equation}
\beta_p =  \z_{(p,N)}   \z_{(p,E)} \z_{(p,S)} \z_{(p,W)},
\end{equation}
where we index edges $e = ( p, R)$ on the north $N$, east $E$, south $S$ or west $W$ boundary edge of plaquette $p$.

We also have decorated vertex and plaquette operators supported mainly on the qutrit subsystem--- see Fig. \ref{fig:stabilizers}c and d:
\begin{equation}
S_v = (A_v + A_v^\dagger)/2, \quad S_p = (B_p + B_p^\dagger)/2, 
\end{equation}
where
\begin{equation}
A_v = \X_{(v,N)} \X_{(v,E)} \X_{(v,S)} ^{-\delta(v)} \X_{(v,W)}^{-\gamma(v)}, \label{Eqn:Astab}
\end{equation}
and
\begin{equation}
B_p = \Z_{(p,N)}^{ \xi(p)} \Z_{(p,E)}^ {-\kappa(p)} \Z_{(p,S)} ^{-\lambda(p)} \Z_{(p,W)}, \label{Eqn:Bstab}
\end{equation}
with exponents $\gamma(v)$, $\delta(v)$, $\xi(p)$, $\kappa(p)$ and $\lambda(p)$ are diagonal Pauli terms acting on the qubit subsystem. Explicitly we have
\begin{equation}
\gamma(v) = \z_{(v,W)}, \quad \delta(v) = \z_{(v,S)}, \quad \xi(p) = \z_{(p,W)},
\end{equation}
\begin{equation}
\kappa(p) = \z_{(p,W)} \z_{(p,N)} \z_{(p,E)} \equiv \z_{(p,S)}\beta_p,
\end{equation}
and
\begin{equation} 
\lambda(p) =  \z_{(p,N)} \z_{(p,E)} \z_{(p,S)} \z_{(p,W)}  \equiv \beta_p. 
\end{equation}

\begin{table}
    \centering
    \begin{tabular}{|c|c|c|}
    \hline
         \textbf{Anyon} & \textbf{Dimension} & \textbf{Alternate Label} \\
         \hline
         $1$ & 1 &  A, $[+]$\\
         $\eta$ & 1 &  B, $[-]$\\
         $\mu$ & 3 &  D, $[C_2]$\\
         $\phi$ & 3& E, $[C_2, -]$\\
         $e$ & 2 &  C, $[2]$\\
         $m$ & 2 &  F, $[C_3]$\\
         $f$ & 2& G, $[C_3, \omega]$\\
         $g$ & 2& H, $[C_3, \bar{\omega}]$\\
         \hline
    \end{tabular}
    \caption{Correspondence between our labeling convention for $S_3$ anyons and others used in the literature.}
    \label{tab:anyon-labels}
\end{table}

All of these stabilizers are Hermitian with eigenvalues $\pm 1$. We can consider a Hamiltonian
\begin{equation}
H = -\sum_v S_v -\sum_p S_p - \sum_v \alpha_v   - \sum_p \beta_p,
\end{equation}
where the ground states are $+1$ eigenvalue eigenstates of all the stabilizer operators; $\alpha_v$, $\beta_p$, $S_v$ and $S_p$. We can also consider excited states of this Hamiltonian where individual Hamiltonian terms are violated.

We say an excitation appears at a vertex or plaquette if its respective stabilizer is violated. The type of stabilizer that is violated determines the species of the particle.
We have electric particles $e$, and $\eta$ associated to violated vertex stabilizers  $S_v$ or $\alpha_v$, respectively, and magnetic particles $m$ and $\mu$ associated to violated plaquette stabilizers $S_p$ or $\beta_p$ operator, respectively.
Note that we have chosen labels $\eta$ and $\mu$ for particles associated to the star and plaquette operators on the qubit subsystem as they stand for Greek words $\underline{\eta} \lambda \epsilon \kappa \tau \rho \iota \kappa \textrm{\it \'{o}}  \varsigma$ and $\underline{\mu} \alpha \gamma \nu \eta \tau \iota  \kappa \textrm{\it \'{o}}  \varsigma$ meaning electric and magnetic, respectively.

The fusion rules of these anyons are encoded by the commutation relations of the stabilizers we have already defined. Let us define the group commutator
\begin{equation}
[V,W] \doteq V W V^{-1} W^{-1}.
\end{equation}

First of all, it is easily checked that $\beta_p$ commutes with all stabilizers:
\begin{equation}
[\alpha_v, \beta_p] = [A_v, \beta_p] = [B_q, \beta_p] = [\beta_q, \beta_p] = 1,
\end{equation}
for any choice of $v,q,p$. We find some non-trivial commutators among other pairs of stabilizers.
\begin{equation}
[A_v,B_p] = \omega^{(1 - \beta_p)/2}, \label{Eqn:ABcommutator1}
\end{equation}
for the case where $v$ lies on the south-west corner of plaquette $p$. Otherwise we have that 
\begin{equation}
[A_v,B_p] = 1, \label{Eqn:ABcommutator2}
\end{equation}
for any other choice of $v$ and $p$.  Similarly, we have
\begin{equation}
[A_u,A_v] = [B_p, B_q]= 1, \label{Eqn:AABBcommutators}
\end{equation}
for any choice of $u,v,p,q$. Lastly, we find that
\begin{equation}
[\alpha_v, A_v] = A_v, \quad [\alpha_v, B_p] = B_p, \label{Eqn:alphaAalphaBcomm}
\end{equation}
where the latter holds for the case where $v$ lies on the south-west corner of $p$. Otherwise we have that $[\alpha_v, B_p] = 1$, and likewise $[\alpha_u, A_v] = 1$ for all $u \not=v$.

Eq. \ref{Eqn:ABcommutator1} tells us that the presence of a violated $\beta_p$ stabilizer prevents neighboring $A_v$ and $B_p$ terms from commuting.  In terms of anyonic excitations, this means we cannot measure the presence of $e$ and $m$ excitations ($S_v$, $S_p$ stabilizer violations) in the presence of a $\mu$ anyon (a $\beta_p$ stabilizer violation). This physics is expressed by the following fusion rule:
\begin{equation}
    \mu \times a = \mu, \quad a = e, m.
\end{equation}
We can identify the $e$ and $m$ particles hidden by $\mu$ anyons by correcting violated $\beta_p$ stabilizers in pairs. This correction corresponds to a fusion operation where a pair of $\mu$ anyons are combined. We have
\begin{equation}
    \mu \times \mu = 1 + e + m + e \times m.
\end{equation}
Here we have written $e \times m$ to denote all anyons that are the result of fusing an $e$ and an $m$ together (we will discuss these anyons below). Similarly, Eq. \ref{Eqn:alphaAalphaBcomm} tells us that the commutation relations between $A_v$ ($B_p$) and $\alpha_v$ are determined by $A_v$ ($B_p$) itself, meaning that $e$ ($m$) anyons can absorb both $\eta$ anyons and other $e$ ($m$) anyons.
\begin{equation}
    a \times a = 1 + \eta + a, \quad a \times \eta = a,  \quad a = e, m.
    \label{Eqn:eXeAndmXm}
\end{equation}
Note that $\eta$ anyons always fuse deterministically:
\begin{equation}
    \eta \times \eta = 1
\end{equation}
A point of terminology: anyons that have multiple fusion outcomes are `non-Abelian', while anyons with deterministic fusion are `Abelian'.

In addition to $e, m, \eta$, and $\mu$, we also have fermionic and parafermionic excitations, $\phi$, $f$ and $g$, where we note again that we have chosen the symbol $\phi$ to stand for the Greek word for fermion; $ \underline{\phi} \epsilon \rho \mu \iota \textrm{\it \'{o}} \nu \iota \textrm{\it o}$. We have that the fusion product of an electric and magnetic particle on the qubit subsystem gives rise to a fermionic excitation
\begin{equation}
\eta \times \mu = \phi,
\end{equation}
and we also have that
\begin{equation}
    \mu \times m = \mu + \phi, \qquad \mu \times e = \mu + \phi
\end{equation}

We also obtain parafermionic excitations $f$ and $g$ as the fusion product of $e$ and $m$ excitations. There are two distinct outcomes of this fusion process; $f$ and $g$. 
\begin{equation}
e \times m = f + g. \label{Eqn:eXm}
\end{equation}
These anyons fuse with themselves as follows:
\begin{equation}
f \times f = 1 + \eta + f, \quad g \times g = 1 + \eta + g.
\end{equation}
where we note a parallel with the fusion rules shown in Eq.~\ref{Eqn:eXeAndmXm}.
Conversely to Eq.~\ref{Eqn:eXm}, we have
\begin{equation}
f \times g = e + m.
\end{equation}
In general we have 
\begin{equation}
a \times b = c + d,
\end{equation}
where $a$, $b$, $c$, $d$ are all different charge types of the four species $e$, $m$, $f$ and $g$. We can distinguish parafermions $f$ and $g$ on a site consisting of a nearby vertex-plaquette pair $(v,p)$ with the following projectors
\begin{equation}
F_{v,p} = (1 - A^\dagger_vB_p - A_v B_p^\dagger) /2, 
\end{equation}
and
\begin{equation}
G_{v,p} = (1 - A_vB_p - A^\dagger_v B_p^\dagger) /2
\end{equation}
where $F_{v,p}$ ($G_{v,p}$) projects onto a site occupied by an $f$ ($g$) particle.

For completeness, we also have the following fusion rules
\begin{equation}
\phi \times \phi = 1 + e + m + e\times m,
\end{equation}
and
\begin{equation}
\mu \times \phi = \eta + e + m + e\times m.
\end{equation}
It may be worth comparing our labeling convention to other works; see Table~\ref{tab:anyon-labels}.

We claim that it is sufficient to concentrate on the electric and magnetic excitations of the model, and neglect all fermionic and parafermionic excitations. Indeed, we see from the fusion rules that the fermion $\phi$ can be expressed uniquely as the fusion product of $\mu$ and $\eta$. We can therefore neglect this particle by regarding it in terms of its composite parts. Likewise, parafermions $f$ and $g$ are the local product of $e$ and $m$ anyons. We can therefore express parafermions in terms of $e$ and $m$ excitations. In any case, the $B_p$, $S_v$ and $S_p$ operators will also detect $\phi, f, g$ anyons, and when a cluster containing $\phi, f, g$ anyons is ungauged, the (para)fermions in the cluster will be split into their constituent $\mu, e, m$ anyons.

\subsection{Gauging}

We make use of gauging throughout a fault-tolerant quantum computation using $D(S_3)$. It is used to prepare the topological phase $D(S_3)$ from the phase $D(\mathbb{Z}_3)$. In a similar vein, we can also create computational anyons in the $D(S_3)$ phase.
We also use it to perform fusion measurements. This enables us to read out the encoded space of our computational anyons, as well as to read out the total charge of a local configuration of erroneous anyons to determine their neutrality.

In condensed matter and high-energy physics, `gauging' is a process that translates between a model with a global symmetry and one with a `gauge freedom', which is a localized version of the global symmetry \cite{RevModPhys.51.659}. This localization works in the following way. By definition, applying a symmetry to a symmetric model doesn't change its properties. Thus, it is convention how we label different symmetry sectors: the only physically meaningful variable is the difference between conventions. By introducing new degrees of freedom to keep track of different conventions in different places, we are free to change our convention locally, as long as we also update our bookkeeping degrees of freedom. This is a `gauge transformation'. In essence, by adding in new degrees of freedom, we have transformed our global symmetry into a local transformation. If we then focus purely on the properties of these new gauge degrees of freedom, we can construct a `gauge' theory, which is the `gauged' version of our original model.

We can think of this gauging procedure as a code deformation of a quantum error-correcting code. We begin with a model with a transversal logic gate. This transversal gate is a global symmetry in our condensed-matter parlance. With the use of some additional physical qubits, which we refer to as a gauge field, gauging decomposes the global transversal gate into local terms. These new terms are included in the stabilizer group of a new code that results from our code deformation. Indeed, one could regard this code deformation as a weight-reduction~\cite{hastings2023quantumweightreduction, Cohen2022} for the transversal logical operator of interest. The standard procedure for gauging ensures the resulting code has a stabilizer group of projectors that commute on the code space of the gauged code.

\subsubsection{Ungauging $S_3$}

The quantum double of $S_3$ can be thought of as the result of gauging a symmetry of the $\mathbb{Z}_3$ (qutrit) toric code. Recall that the qutrit toric code can be described by the stabilizers generated by $S^{\mathbb{Z}_3}_v = A^{\mathbb{Z}_3}_v + A^{\mathbb{Z}_3 \dagger}_v$, $S^{\mathbb{Z}_3}_p = B^{\mathbb{Z}_3}_p + B^{\mathbb{Z}_3 \dagger}_p$, where $A^{\mathbb{Z}_3}_v$ and $B^{\mathbb{Z}_3}_p$ denote the following qutrit operators:
\begin{equation}
\begin{aligned}
    A^{\mathbb{Z}_3}_v &= \X_{(v, N)} \X_{(v, E)} \X^\dagger_{(v, S)} \X^\dagger_{(v, W)} \\
    B^{\mathbb{Z}_3}_p &= \Z^\dagger_{(p, N)} \Z_{(p, E)} \Z_{(p, S)} \Z^\dagger_{(p, W)} 
    \label{eq:Z3-stabilizers}
\end{aligned}
\end{equation} The excitations corresponding to $\omega = e^{2\pi i/3}$ and $\omega^* = e^{4\pi i/3}$ eigenstates of $A^{\mathbb{Z}_3}_v$ are labeled $e_1$ and $e_2$, respectively. Similarly, the $B^{\mathbb{Z}_3}_p$ violations are $m_1$ and $m_2$. The $\mathbb{Z}_3$ toric code has a charge-conjugation symmetry, which swaps the qutrit states $\ket{1}$ and $\ket{2}$. At the anyon level, this is reflected in the fact that re-labeling $e_1 \to e_2$ and $m_1 \to m_2$ preserves all the physical properties of the code. We can introduce qubit gauge degrees of freedom to realize a local version of the charge conjugation symmetry, yielding the $\alpha_v$ term of $S_3$. Note that the product of all $\alpha_v$ terms gives the global charge conjugation operator. The way that the existing stabilizers of the $\mathbb{Z}_3$ toric code have to be modified to commute with $\alpha_v$ yields $S_v, S_p$. 

We can return to the $\mathbb{Z}_3$ toric code from $S_3$ by measuring the qubits of $S_3$ in the $\z$ basis. This can be done globally by measuring all qubits, or locally by only measuring qubits in a small patch. Measuring $\z$ projects the qubits onto a particular configuration of $\beta_p$ stabilizers, and glues together the $\alpha_v$ stabilizers across the entire patch being ungauged. Additionally, the $S_v$ and $S_p$ operators are projected onto modified versions of the stabilizers in Eq. \ref{eq:Z3-stabilizers}, with the daggers in different places depending on the qubit measurement outcomes.

\subsubsection{Ungauging Computational Anyons}

We are interested in running the decoding in parallel with a computation; we may ask how the ungauging and re-gauging operations impact any computational anyons living in a committed region. Do the measurements we are performing read out any logical information? No: crucially, logical information is encoded non-locally between well-separated computational anyons. Local operations in the neighborhood of a single computational anyon do not have access to the logical state. As long as our decoder never ungauges a region containing multiple computational anyons, it will never reveal any logical information. 

This argument can be made precise in the following way: consider a set of four $\mu$ anyons, divided into two pairs. Suppose we encode a qubit such that $\ket{0}$ corresponds to the pairs fusing to vacuum, and $\ket{1}$ corresponds to the pairs fusing to $e$ (see Fig. \ref{fig:gauging}a). Logical operations must encircle at least two of the computational anyons: $Z$-like logical operators encircle the pairs with definite fusion, while $X$-like logicals encircle one anyon from each pair. These logicals can be deformed to enclose an abitrary region, so long as the number of computational anyons enclosed remains the same. So long as our ungauging region remains small compared to the distance to any other computational anyon, we can always clean this logical projector away from the support of the ungauging operation. Thus, the logical state must remain unchanged (see Fig. \ref{fig:gauging}b).

\begin{figure}
    \centering
    \includegraphics[width=1\linewidth]{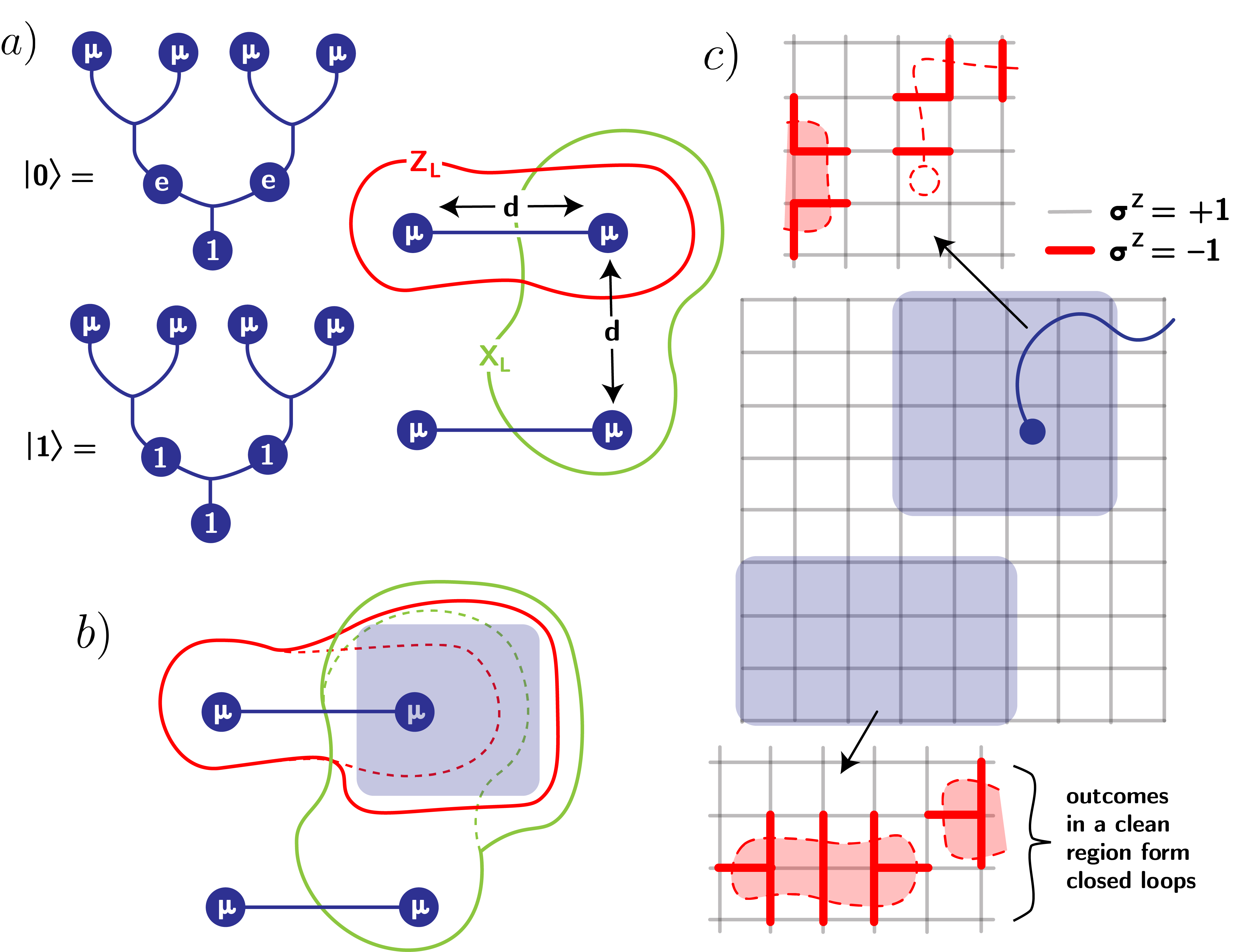}
    \caption{
    a) A logical qubit can be encoded in the fusion outcomes of four $\mu$ anyons: the two states are distinguished by the whether pairs fuse to identity or to $e$. The distance of the logical qubit is determined by the separation between the consistuent $\mu$ anyons, since logical operators must span at least two anyons forming the qubit. b) Logical operators can always be cleaned away from an ungauging region (shaded) containing only one anyon in the logical qubit, meaning the logical information is not disturbed by suitably localized ungauging operations. c) Ungauging is performed by measuring $\sigma^Z$ on every edge in a given region (highlighted in purple). The region without any $\mu$ anyons (bottom) will yield $\sigma^Z = -1$ outcomes that form closed loops, reflecting the fact that all $\beta_p$ stabilizers were satisfied prior to the measurement. A region with a $\mu$ anyon will have open strings of $-1$ outcomes, terminating at the locations of the $\mu$ anyons. }
    \label{fig:gauging}
\end{figure}

\subsubsection{Boundary Detectors}

\begin{figure}
    \centering
    \includegraphics[width=0.7\linewidth]{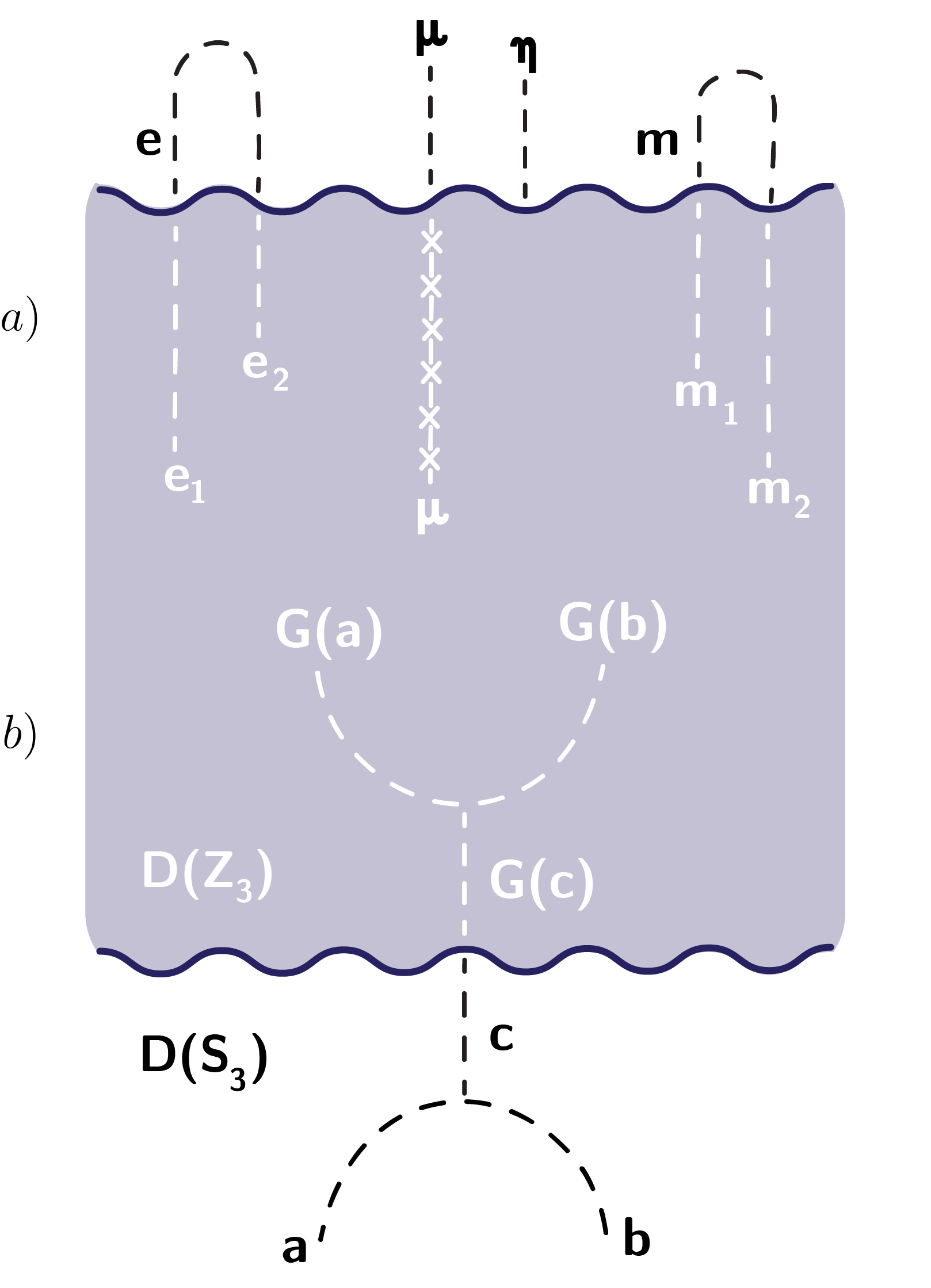}
    \caption{a)~ The domain wall between $D(S_3)$ and $D(\mathbb{Z}_3)$. Crossing into $D(\mathbb{Z}_3)$, $e$ and $m$ anyons are split into two kinds of Abelian anyons, while $\mu$ anyons become confined defects (indicated by the crosses), and $\eta$ anyons become indistinguishable from the vacuum.  b)~ Anyon fusion is consistent across the domain wall: if two anyons $a$ and $b$ fuse to $c$ in $D(S_3)$, their counterparts $G(a)$ and $G(b)$ will fuse to $G(c)$ in $D(\mathbb{Z}_3)$.}
    \label{fig:domain-wall}
\end{figure}

Because the $D(\mathbb{Z}_3)$ and $D(S_3)$ stabilizers are simply related by single site $\z$ measurement, there is a clear correspondence between violations of stabilizers on either side of the gauging map, and thus between anyons in the two codes--- see Fig. \ref{fig:domain-wall}a. This makes the construction of detectors spanning the boundary between $D(S_3)$ and $D(\mathbb{Z}_3)$ straightforward to construct: we merely compare the measurement outcomes for stabilizers and their counterparts on either side. In particular, suppose we have performed the ungauging measurements at time $t = t^*$. Then we can construct the boundary detectors for $\mu, e, m$ anyons as follows:
\begin{equation}
    D^{\mu}_p(t^*) = \beta_p (t^*) - \beta_p(t^* - 1)
\end{equation}
\begin{equation}
\begin{aligned}
    D^{e/m}_{v/p} (t^*) = S^{\mathbb{Z}_3}_{v/p}(t^*) - S_{v/p}(t^* - 1)
\end{aligned}
\end{equation}
Note that $\beta_p(t^*)$ can be inferred from the single site $\sigma^Z$ measurements that implement the ungauging operation. 

Errors that occur before the ungauging timestep will not light up these detectors, since the eigenvalues of the $D(S_3)$ and $D(\mathbb{Z}_3)$ stabilizers on either side of the ungauging boundary will be the same. The detectors will only light up should a measurement error cause the two stabilizers to disagree, or if a new physical error occurs at the same time as the ungauging.

The $\eta$ anyons in the vicinity of gauging boundaries are a little different--- they do not have detectors spanning the boundary as the $\alpha_v$ stabilizers have no counterparts in the $D(\mathbb{Z}_3)$ phase. The single site $\sigma^Z$ measurement removes the $\alpha_v$ in the ungauged region from the stabilizer group. This means that $\eta$ anyons can be absorbed and emitted by the gauging boundary between $D(S_3)$ and $D(\mathbb{Z}_3)$, and the number of $\eta$ anyons will fluctuate in the vicinity of a gauging boundary. This is not catastrophic, though; $\eta$ cannot absorb other anyons, and so we can wait to decode them globally until the end of computation.

\subsubsection{Error Propagation and Correction}

Now that we have discussed both the ungauging procedure and the detectors that live at the boundary between $D(S_3)$ and $D(\mathbb{Z}_3)$, we can describe how errors are transformed under gauging operations and how we can error correct the ungauging measurements.

We will modify microscopic ungauging process depending on whether a given region contains $\mu$ anyons or not, as they will impact the ungauging measurement outcomes. First, consider the case where all the $\beta_p$ stabilizers were satisfied in a given region prior to ungauging. After ungauging, the $\z = -1$ measurement outcomes should form closed loops (see Fig. \ref{fig:gauging}c) --- this can be used to error correct the measurement outcomes to deal with any faulty readout. Once the measurement outcomes have been error corrected, charge conjugation operators can be applied to the qutrits according to the final loop configuration, correcting the measured qutrit stabilizers to the form of Eq. \ref{eq:Z3-stabilizers}.

If there are $\mu$ anyons in the region prior to ungauging, the $\z$ measurement outcomes will not all form closed loops. Instead, there will be open strings that can terminate at plaquettes where the $\mu$ anyons are located. The modified qutrit stabilizers along these open strings cannot be corrected to their clean form by a finite-depth circuit (reflecting the fact the $\mu$ anyon is still a non-Abelian excitation, albeit a confined defect, in $\mathbb{Z}_3$). This is not an issue: we can just start measuring the clean stabilizers instead, as any $e$ and $m$ anyons created will be localized to the ungauged region and will be cleaned up along with the pre-existing excitations.

It is crucial to note that the ungauging process preserves fusion outcomes--- see Fig.~\ref{fig:domain-wall}b. This means that a neutral cluster of excitations will remain neutral after they have been ungauged. A cluster with nontrivial anyon content will still have the corresponding anyon content once ungauged. This fact underpins the utility of ungauging for error correction: once excitations are ungauged, their total fusion product can be determined from \emph{local} stabilizer measurements. The process of ungauging reveals previously inaccessible information by uncovering absorbed anyons.

The first round of $D(\mathbb{Z}_3)$ stabilizer measurements allows us to evaluate the overall neutrality of the ungauged region. More specifically, by measuring the product of all $A^{\mathbb{Z}_3}_v$ ($B^{\mathbb{Z}_3}_p$) stabilizers in the ungauged region, we can determine the total $e$ ($m$) anyon content. However, given the possibility of measurement errors, any excitations in the vicinity of this round of stabilizer measurements must be allowed the possibility of pairing into it. As such, we should measure the age of any excitation in the ungauged region relative to the ungauging boundary. Once the decoder is ready to re-gauge, the qubits that have been measured out can be reset to $\ket{0}$, and re-entangled with the qutrits by measuring the $D(S_3)$ stabilizers again.

Going forward, it will be useful to define the notion of an `absorber', which is simply any object which can absorb anyons. Spatial and temporal boundaries of the system, ungauging and gauging domain walls, and non-Abelian anyon wordlines are all absorbers. Given that errors can create non-Abelian anyons, we assume groups of errors are absorbing objects as well. The `absorbing region' associated with an error cluster encompasses the entire spacetime region containing the worldlines of the associated anyons. When we ungauge the region containing some group of errors, we terminate its ability to absorb other anyons.

\subsection{Threshold theorem}

In this section we prove that the probability of a logical error of a computation can be suppressed exponentially quickly in the size of the system, assuming the error rate of circuit elements is sufficiently weak. We first define the noise model we consider, then cover the function of the just-in-time decoder more precisely, including how ungauging is used to make corrections, before proving our result.

\subsubsection{Error model and spacetime unit cells}

The computation is carried out on a circuit that can be expressed in a three-dimensional spacetime. This spacetime is divided into small unit cubes with detectors and circuit elements associated to these spacetime units. This division is such that every circuit element and detector is associated to exactly one unit cube. Unit cubes are local such that if an error occurs on a given unit cube, it will only trigger detection events on nearby unit cubes within a small constant radius of where the error has occurred. We model errors on individual circuit elements although in practice we coarse grain the problem such that we are only interested in whether or not any errors occurred on any of the circuit elements associated to a given unit cube. Provided we assume uniform unit cubes with some constant number of circuit elements associated to them, $N$, each of which fail and introduce an error with (let's say for simplicity) uniform probability $\varepsilon$, then we have the probability that a unit cube experiences an error $p$ such that 
\begin{equation}
p = 1 - (1-\varepsilon)^N. 
\end{equation}

The action of the local error channel on our system is given by: 
\begin{equation}
    \mathcal{E}[\rho] = \sum_{E} p^{|E|} \hat{E} \rho \hat{E}^\dagger
\label{eq:error-model}
\end{equation} Here, $E$ is an error configuration; as a mathematical object, it is a set whose elements are locations on the lattice where an error occurred. $\hat{E}$ is the actual operator. $|E|$ is the weight of the error configuration (the number of elements of the set $E$), and $p$ is the error rate. For simplicity, we will call $E$ an error.

An interesting quirk of just-in-time decoding is that the error-correction circuit is adaptive, in the sense that the specific circuit that is conducted at a given spacetime unit cube may depend on errors that have occurred in the past, and it is not {\it a priori} obvious that all choices of circuit will experience the same noise. This may lead to time-like correlations in the noise model. An easy way to deal with this issue is to assume that all choices of circuit performed at a given spacetime unit all experience the same noise. If this assumption is not satisfying to the reader, one can modify the noise model as follows. We simply assume that all unit cubes experience the noise of all possible circuits they could be running in tandem. This upper bounds the noise that each unit cube will experience, independent of the circuit it runs. Moreover, since there are only a small finite number of constant depth circuits a given unit cube may be running, there is still only a finite probability $p$ of an error occurring at a given unit cube, even if we make this conservative upper bound to the noise a unit cube experiences. As such we can still find a lower bound to the threshold error rate.

\subsubsection{Error clustering}

Before describing the details of the just-in-time decoder, we first characterize some important properties of the local noise model we are considering. In particular, we will heavily use the fact that, when the noise is sufficiently small, error configurations consist of small, well-separated `chunks' \cite{Bravyi_2013}. Anyons belonging to such a chunk must be neutral, having been created by a local operator. The just-in-time decoder identifies these neutral clusters of excitations (by ungauging) and corrects them within a small enough neighborhood that they never talk to each other. For an illustration of this idea, see Fig. \ref{fig:app-spacetime}a--- which depicts an example spacetime error configuration--- and Fig. \ref{fig:app-spacetime}b,  which demonstrates how the just-in-time decoder deals with such a configuration.

Consider a spacetime lattice where errors may occur on edges living in the $x-y$ plane (physical errors) or edges in the $z$ direction (measurement errors). Such errors light up detectors, which live on sites. We defined our error model in Eq. \ref{eq:error-model} in terms of error configurations $E$: recall that an error configuration is a set whose elements are the edges where errors occurred. Going forward, we the $L_\infty$ metric to measure the distance between two locations $u, v$ in spacetime: $|u-v|$ will be given by the maximum distance between the two points in any direction.

The following terminology will be useful: the diameter of a subset $M \subseteq E$ is the maximum distance between any two elements of $M$. The subset $M$ is $r$-connected if it cannot be decomposed into disjoint subsets separated by more than distance $r$. Conversely, two disjoint subsets $M, N \subseteq E$ are $r$-separated if the minimum distance between an element of $M$ and an element of $N$ is $r$. The $r$-neighborhood of $M$ is the set of sites and edges up to distance $r$ away from $M$, including the edges in $M$ itself.

\begin{figure*}
    \centering
    \includegraphics[width=0.75\linewidth]{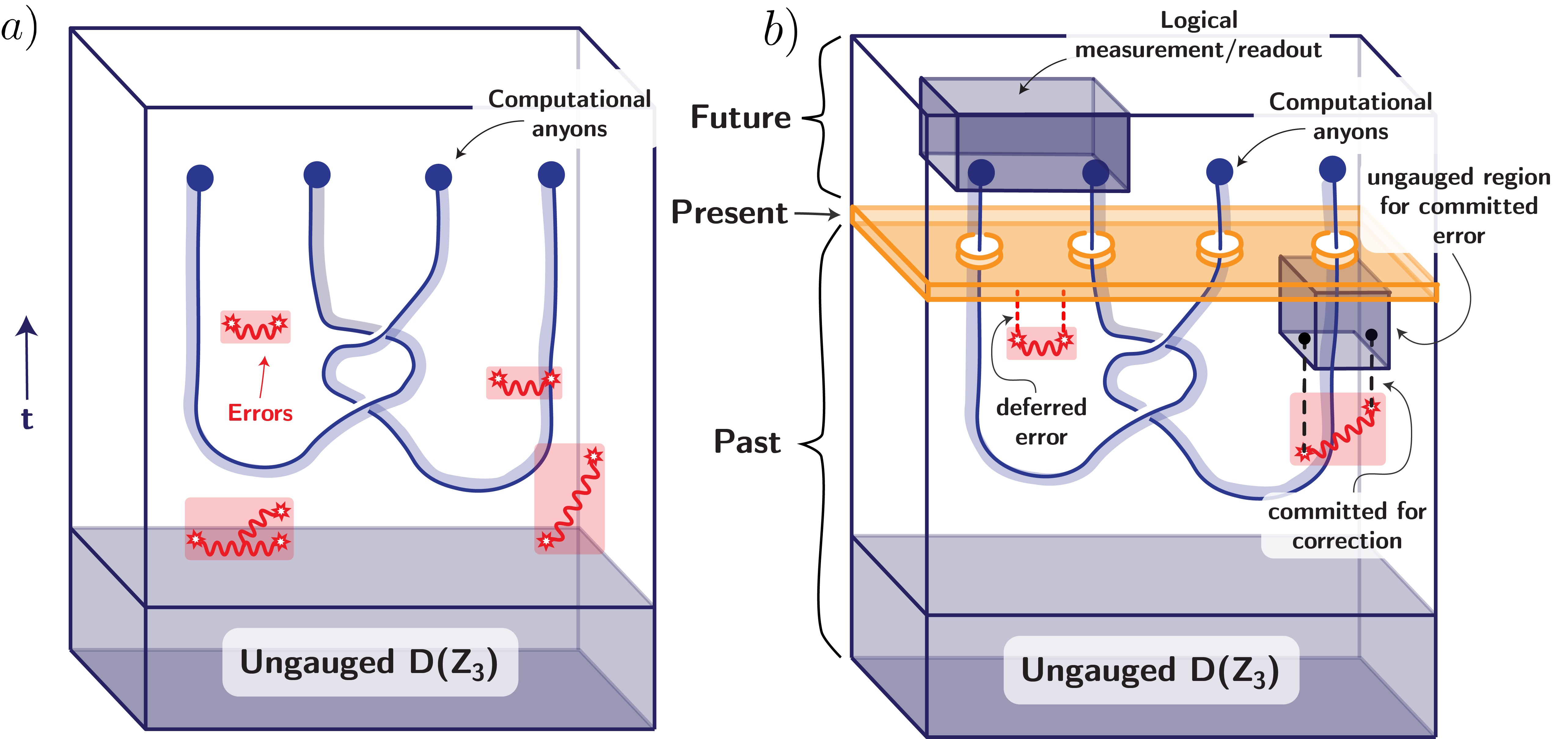}
    \caption{a)~ A spacetime configuration includes computational anyons (blue) and errors (red). Computational anyons are kept far apart to maintain code distance. In the detector model, errors are string-like, with detection events (anyons) at their endpoints. b)~ The just-in-time decoder takes into account the past spacetime history to make decisions about the present correction step. Errors that have lived long enough can be committed for correction, which involves first ungauging to terminate the absorbing region and reveal the fusion outcome of the error cluster. Errors that are too young are deferred until a later timestep. Computational anyons are protected from readout as long as the ungauged regions contain at most a single computational anyon.}
    \label{fig:app-spacetime}
\end{figure*}

\begin{figure}
    \centering
    \includegraphics[width=0.75\linewidth, origin=c]{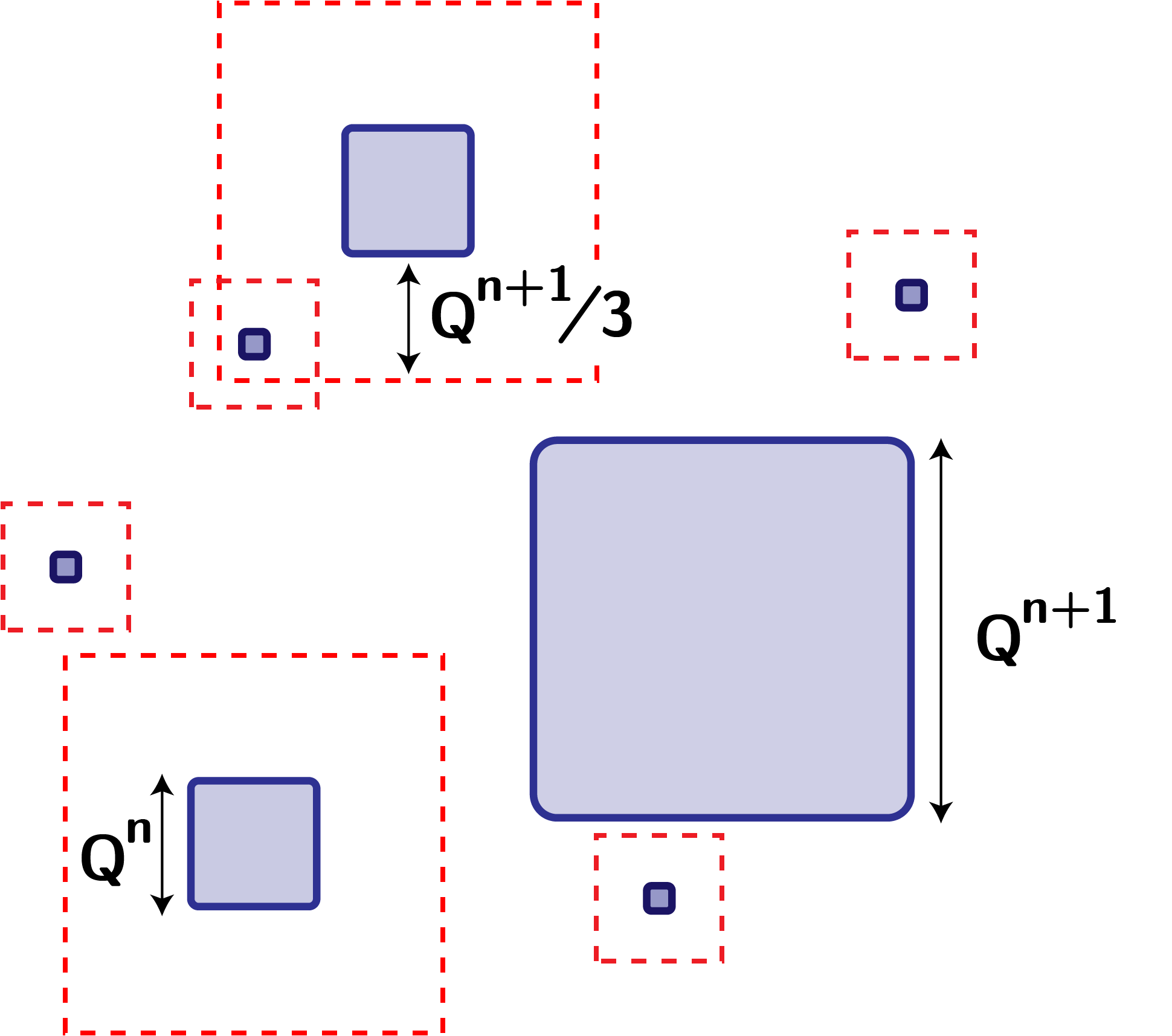}
    \caption{The chunk decomposition. All errors in a given error configuration can be contained in the blue boxes of size $Q^n$ for $n = 0, \dots, \sim \log L$, where $L$ is the linear size of spacetime. Each box of size $Q^n$ has a buffer region of size $(1/3)Q^{n+1}$ surrounding it (red dashed lines). No box of size equal or greater may overlap the buffer region, although smaller boxes may be closer.}
    \label{fig:chunk-decomp}
\end{figure}

Now we introduce the `chunk' decomposition, following the construction of Bravyi and Haah \cite{Bravyi_2013}. The idea is to split an error $E$ into a hierarchy of subsets of larger and larger sizes. Later on, our decoder will be designed to address the different levels of the hierarchy separately, preventing errors from spreading past a given length scale. 

\begin{definition}[Chunk Decomposition]
Given an error $E$, and an integer $Q \geq 2$, we define \textbf{chunks} of this error recursively: 
    \begin{itemize}
    \item A \textbf{level-0} chunk is a single site with support in $E$. Denote the set of all such sites as $E_0 = E$.
    \item A \textbf{level-n} chunk is the disjoint union of two level-(n-1) chunks with diameter $\leq Q^n/2$. The set of sites belonging to a level-$n$ chunk is given by $E_n \subseteq E$.
\end{itemize} An error $E$ thus has a collection of subsets: \begin{equation}
    E = E_0 \supseteq E_1 \supseteq \cdots \supseteq E_m
\end{equation} with $E_{m+1} = \varnothing$ for some integer $m$. We define the sets $$F_n = E_n \setminus E_{n+1},$$ which are the subsets of $E$ that take part in a level-$n$ chunk but no higher chunk. Then the \textbf{chunk decomposition} of $E$ is given by: \begin{equation}
    E = F_0 \cup F_1 \cup \cdots \cup F_m
\end{equation} $E$ is called a level-$m$ error if the decomposition terminates at $F_m$.
\end{definition}

We will be interested not only in the sets $F_n$, but in \emph{connected subsets} $F_{n, \alpha} \subseteq F_n$. We will call a $Q^n$-connected component $F_{n, \alpha} \subseteq F_n$ a \textbf{level-$n$ nugget}. In particular, the following separation lemma can be proven for nuggets \cite{Bravyi_2013} (see Fig. \ref{fig:chunk-decomp} for a visualization):
\begin{lemma}[Nugget Separation Lemma]
    For $Q \geq 6$, a level-$n$ nugget must be separated from nuggets of the same size or larger by a distance of at least $(1/3) Q^{n+1}$. Stated differently:
    \begin{equation}
        |F_{n, \alpha} - F_{m, \beta}| \geq (1/3)Q^{\max(n, m)}
    \end{equation} where $F_{n, \alpha}$ and $F_{m, \beta}$ are distinct nuggets. 
    \label{lem:nugget-sep}
\end{lemma}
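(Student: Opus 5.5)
We follow the argument of Bravyi and Haah~\cite{Bravyi_2013}. We treat the statement in the form $|F_{n,\alpha}-F_{m,\beta}|\geq (1/3)Q^{\max(n,m)+1}$, which is the first sentence of Lemma~\ref{lem:nugget-sep}; the displayed inequality is weaker and follows immediately. Two structural facts come first, both by induction on the level. (i) Every level-$n$ chunk has diameter at most $Q^n/2$; this holds by definition. (ii) Every nugget $F_{n,\alpha}$ has diameter at most $Q^n$, which is the hard step. From (ii) the separation follows by contradiction: if two distinct nuggets were too close, their union would contain a chunk of level higher than allowed, contradicting the definition of $F_n=E_n\setminus E_{n+1}$.

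\textbf{Step 1 (the separation argument, assuming the diameter bound).} Take distinct nuggets $F_{n,\alpha}$ and $F_{m,\beta}$ with $m\geq n$, and suppose $|F_{n,\alpha}-F_{m,\beta}|<(1/3)Q^{m+1}$. If $m=n$, distinct $Q^n$-connected components of $F_n$ are automatically more than $Q^n$ apart, but this alone is not enough. So we use the following facts:
\begin{itemize}
\item every site of $F_{n,\alpha}$ lies in some level-$n$ chunk, and the union of the nugget's chunks (all contained in $E_n$) has diameter at most $Q^n$ by (ii);
\item similarly, $F_{m,\beta}$ is covered by level-$m$ chunks of total diameter at most $Q^m$.
\end{itemize}
Choose a level-$m$ chunk $C$ meeting $F_{m,\beta}$ and a level-$n$ chunk $C'$ meeting $F_{n,\alpha}$, and promote $C'$ to a level-$m$ chunk (lower chunks sit inside higher ones as subsets of $E_n\supseteq E_m$; if needed, replace $C'$ by the level-$m$ chunk containing a site of $F_{n,\alpha}$ when $n=m$). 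The union $C\cup C'$ then has diameter at most
\[
Q^m+Q^m+(1/3)Q^{m+1}\leq Q^{m+1}/2 \quad\text{whenever } Q\geq 6,
\]
since $2Q^m\leq (1/6)Q^{m+1}$ exactly when $Q\geq 12$. So the constants must be tuned, and we check them carefully against the chunk diameters $Q^n/2$ rather than the nugget diameters $Q^n$: $Q^m/2+Q^m/2+(1/3)Q^{m+1}\leq Q^{m+1}/2$ iff $Q\geq 6$. This union is therefore a level-$(m+1)$ chunk, so its sites lie in $E_{m+1}$. That contradicts the membership of the relevant sites in $F_m$ (when $n=m$), or in $F_n$ with $n<m$, which lies outside $E_{n+1}\supseteq E_{m+1}$. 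The point to be careful about is that the contradiction needs a site of $F_{n,\alpha}$ itself, not merely of its chunk, to land in a higher chunk. We will therefore choose $C'$ to be a level-$n$ chunk whose sites are in $F_{n,\alpha}$.

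\textbf{Step 2 (the main obstacle: the diameter of nuggets).} We must show that a $Q^n$-connected component of $F_n$ has diameter at most $Q^n$ (or at most a constant multiple that the tuning above can absorb). The approach is to argue that if a nugget contained two level-$n$ chunks at distance at most $Q^n$ from each other, then a pair of them would combine into a level-$(n+1)$ chunk, since the combined diameter is at most $Q^n/2+Q^n+Q^n/2=2Q^n\leq Q^{n+1}/2$ for $Q\geq 4$. That would remove their sites from $F_n$, so no chaining is possible. Consequently each nugget consists of sites of essentially a single level-$n$ chunk, which has diameter at most $Q^n/2$. Making this rigorous requires care with the pairwise structure of chunks, since a level-$(n+1)$ chunk is a union of exactly two level-$n$ chunks, so it is precisely the right object to absorb any two nearby level-$n$ chunks. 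We expect this bookkeeping, rather than the final inequality, to be the delicate point.
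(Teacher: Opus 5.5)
Your Step 1 fails in the cross-level case $n<m$. You cannot ``promote'' a level-$n$ chunk $C'$ meeting $F_{n,\alpha}$ to a level-$m$ chunk. A site of $F_n=E_n\setminus E_{n+1}$ lies in no level-$(n+1)$ chunk, hence in no level-$m$ chunk. A union of a level-$m$ chunk with a level-$n$ chunk is also not a level-$(m+1)$ chunk. You never check disjointness either, which the definition requires.

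Done honestly, the argument goes as follows. Take $u\in F_{n,\alpha}$, $v\in F_{m,\beta}\subseteq E_n$, and level-$n$ chunks $A\ni u$, $B\ni v$.
\begin{itemize}
\item If $A\cap B=\emptyset$ and $|u-v|\le Q^{n+1}/3$, then $\mathrm{diam}(A\cup B)\le Q^n+Q^{n+1}/3\le Q^{n+1}/2$ for $Q\ge 6$. So $A\sqcup B$ is a level-$(n+1)$ chunk and $u\in E_{n+1}$, a contradiction.
\item If $A\cap B\neq\emptyset$, you only learn $|u-v|\le Q^n$. For $m=n$ this puts $u,v$ in the same $Q^n$-connected component of $F_n$, a contradiction. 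For $m>n$ it contradicts nothing.
\end{itemize}
So the argument proves two things only. Distinct level-$n$ nuggets are more than $Q^{n+1}/3$ apart. For $u\in F_n$ and $v\in E_{n+1}$, either $|u-v|\le Q^n$ or $|u-v|>Q^{n+1}/3$. The paper gives no proof of its own here; it defers to Bravyi and Haah.

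The missing case cannot be rescued, because the cross-level claim is false with the paper's definitions. Take $Q=6$ and error sites on a line at $0,3,6,20,21$. The level-$1$ chunks are $\{0,3\}$, $\{3,6\}$, $\{20,21\}$. The only level-$2$ chunk is $\{3,6\}\sqcup\{20,21\}$, of diameter $18=Q^2/2$: $\{0,3\}\cup\{20,21\}$ has diameter $21$, and $\{0,3\}$, $\{3,6\}$ overlap. Hence $F_1=\{0\}$ and $F_2=\{3,6,20,21\}$. The level-$1$ nugget sits at distance $3$ from the level-$2$ nugget. This violates both the bound $Q^{n+1}/3=12$ of the first sentence and the displayed bound $Q^{\max(n,m)}/3=12$. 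Using sites $0,a,2a,b,b+1$ with $a=\lfloor Q/2\rfloor$ and $b=a+\lfloor Q^2/2\rfloor-1$ gives the same failure for every $Q\ge 6$.

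Two smaller points:
\begin{itemize}
\item Your claim that the display ``is weaker and follows immediately'' is wrong for $m\ge n+2$. Your $\max(n,m)+1$ reformulation is stronger than either form.
\item Step 2 fails for the same overlap reason. With $E=\{0,3,6\}$ and $Q=6$ there is no level-$2$ chunk, so $F_1=\{0,3,6\}$ is a single nugget of diameter $Q>Q/2$, not a single chunk. The correct bound of $Q^n$ comes from picking $u,v$ in the nugget with $Q^n<|u-v|\le 2Q^n$, which $Q^n$-connectedness allows. Their chunks are then forced to be disjoint, with union of diameter at most $3Q^n\le Q^{n+1}/2$. That bound is not needed for the same-level separation anyway.
\end{itemize}
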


So far, we have focused on a physical error $E$. However, the input to any decoder won't be the actual error $E$, but rather its syndrome $\sigma(E)$. The syndrome $\sigma(E)$ is the set of all detectors violated by a physical error $E$. Since we are considering a local topological code, $\sigma(E)$ must lie in the $1$-neighborhood of $E$. We refer to the set of defects created by a level-$n$ nugget $F_{n, \alpha}$ as a \textbf{level-$n$ cluster} $B_{n, \alpha}$.

It is important that the cluster sizes terminate at some point, ideally such that the maximum cluster size is much smaller than the size of the system. This will ensure that no single nugget is large enough to lead to a logical error, as long as the correction applied by the decoder is sufficiently localized. In \onlinecite{Bravyi_2013}, it was proven that, given the error rate goes as $p \sim 1/(3Q)^6$, the probability of a cluster of size $Q^n$ decays doubly-exponentially with $n$. Our proof strategy will be to find $Q$ such that our chunk decomposition has the properties we need, which will then yield a lower bound on the threshold error rate.

\subsubsection{Just-in-time decoding}

We now describe the just-in-time decoding algorithm itself. In the simplest terms, the just-in-time decoder is a set of rules that tell us how to fuse groups of erroneous anyons. It must do this on the fly, as these anyons may be hiding the presence of other excitations that have been absorbed. For $D(S_3)$ these are the excitations labeled $\mu$, $e$ and $m$. We therefore consider fusing groups of $\mu$, $e$, and $m$ excitations to reveal excitations these anyons may be hiding. 

Unlike the other anyon types, $\eta$ anyons cannot hide any other types of excitations. We can therefore deal with all the $\eta$ detection events globally once the computation is complete. So, going forward, we will assume the just-in-time decoder is focusing solely on the $\mu, e$ and $m$ anyons.

The just-in-time decoder needs to reveal absorbed anyons as quickly as possible while maintaining fault-tolerance. We can strike this balance by enforcing a commit rule that only allows corrections once the decoder is reasonably sure that a given set of excitations exist. At each timestep $T$, the just-in-time decoder looks at the syndrome history for $t \leq T$, and sorts excitations into two categories: (i) excitations that are younger than the distance to the nearest absorber, and (ii) excitations that are as old or older than that distance. The first category is deferred until a later timestep, when the decoder can be more sure of how to correct it, while the latter category can be corrected using ungauging. 

A high-level description of the decoder is as follows:
\begin{enumerate}
    \item At timestep $T$, sort clusters of excitations into two categories: `defer' and `attempt to correct'. Clusters with any constituent excitations younger than the size of the cluster are deferred. Clusters where all members are at least as old as the size of the cluster are put into `attempt to correct'. 
    \item Reverse the most recent measurement outcome to move the excitation to the next time step if it is in the `defer' category.
    \item For clusters in the `attempt to correct' category, we have three scenarios: (i) the cluster lives in $D(S_3)$, (ii) the cluster lives in an ungauged $D(\mathbb{Z}_3)$ region, and (iii) it spans a boundary between the two.
    \begin{enumerate}
        \item If (i) or (iii), ungauge such that the full cluster lives in $D(\mathbb{Z}_3)$. 
        \item If (ii), determine whether the cluster is neutral or not. If the ungauged cluster is neutral, the decoder corrects it. If the cluster is not neutral, it must be part of a larger cluster the decoder has not identified yet, and should be deferred until a later time.
    \end{enumerate}
    \item Repeat for the next timestep.
\end{enumerate}
We can prove that such a decoder must have a threshold for $D(S_3)$.

\subsubsection{Threshold Proof}


\begin{figure}
    \centering
    \includegraphics[width=\linewidth]{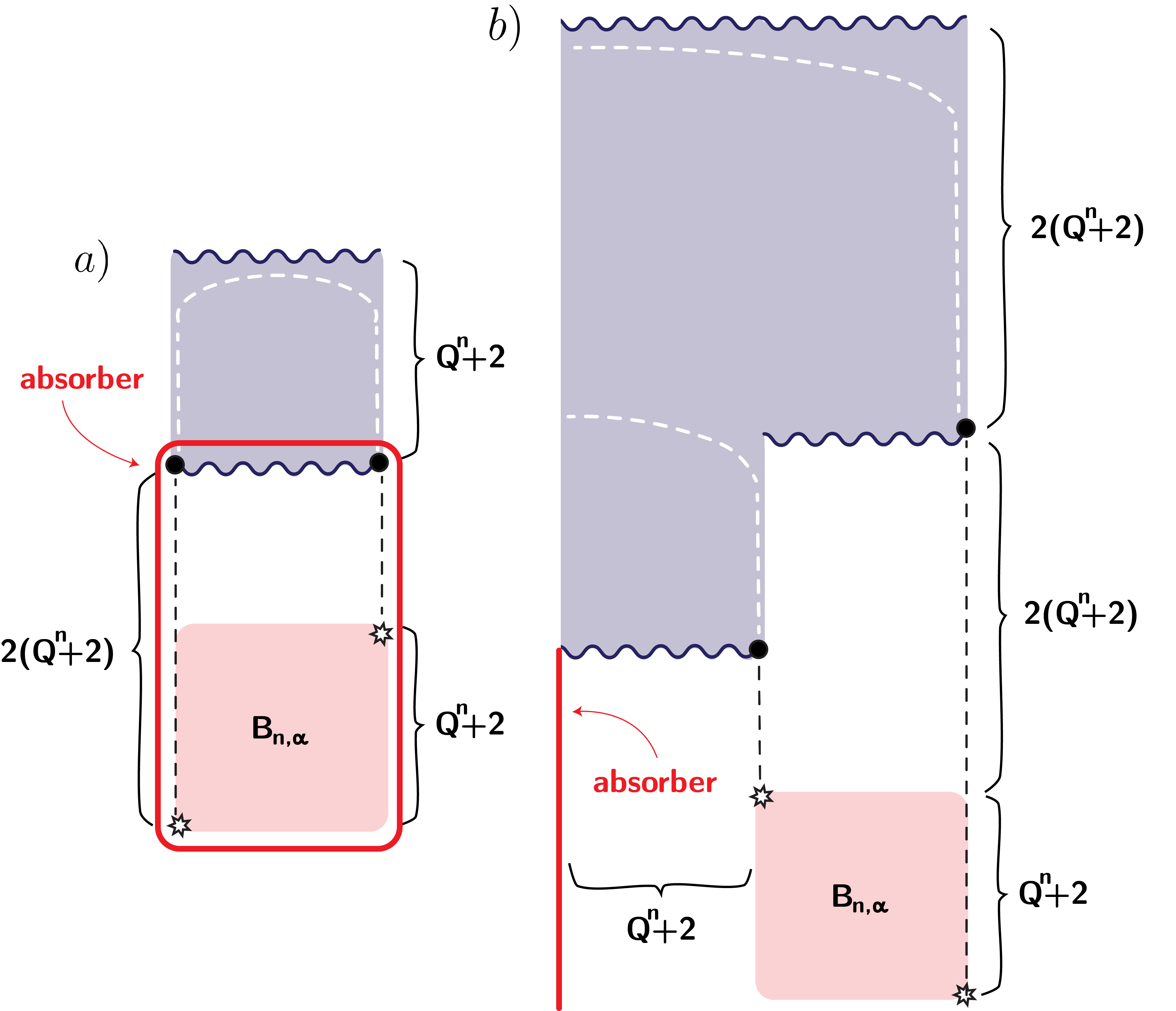}
    \caption{a)~ Far from any other absorbers, a single isolated group of defects of size $Q^n + 2$ is corrected by the decoder in time at most $3(Q^n + 2)$. If the defects lie at the extremal points pictured, the defect in the bottom left will have to wait $2(Q^n + 2)$ timesteps before the other defect has lived long enough for the decoder to commit to a correction. At this point, the ungauging is performed, terminating the absorbing region associated with $B_{n, \alpha}$. The decoder spends $Q^n + 2$ more timesteps in the $D(\mathbb{Z}_3)$ phase before correcting, and re-gauging. b)~ If a cluster is close enough to another absorber, it will be corrected in at most $5(Q^n + 2)$ timesteps. A delay of $Q^n+2$ is possible when the older defect is further from the absorber, since the size of the ungauged region will be $2(Q^n + 2)$ rather than $Q^n + 2$. }
    \label{fig:extremal-isolated-cluster}
\end{figure}

Before we proceed with the first part of the proof, we begin with by proving a simpler lemma that will help us prove the general case. We first show that a single error cluster that is well isolated from all other clusters of errors is corrected within a bounded region of spacetime.

\begin{lemma}
    An isolated nugget $F_{n, \alpha}$ is corrected within the  $4Q^n+7$ neighborhood of $B_{n, \alpha}$, and no defect in $B_{n, \alpha}$ will live longer than $5(Q^n + 2)$.
    \label{lem:isolated-chunk}
\end{lemma}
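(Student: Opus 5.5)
We track a single isolated nugget $F_{n,\alpha}$ through the just-in-time decoder and bound its lifetime and spatial spread. Three quantities are central: the diameter of the cluster, the time at which the decoder commits, and the time spent in the ungauged $D(\mathbb{Z}_3)$ region before correction. The argument is a direct case analysis on the commit rule, following the two scenarios of Fig.~\ref{fig:extremal-isolated-cluster}.

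\textbf{Size of the cluster.} By definition a level-$n$ nugget has diameter at most of order $Q^n$. Since the syndrome lies in the $1$-neighbourhood of the error, the cluster $B_{n,\alpha}$ has diameter at most $Q^n+2$ in the $L_\infty$ spacetime metric. Here ``isolated'' means no other error of comparable or larger size lies within the relevant neighbourhood; Lemma~\ref{lem:nugget-sep} guarantees this up to distance $(1/3)Q^{n+1}$. I will treat the only other objects the decoder can pair with as fixed absorbers: computational anyon worldlines, ungauging walls and boundaries.

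\textbf{Case (a): no absorber nearby.} The decoder commits once every defect in $B_{n,\alpha}$ has age at least the cluster size $Q^n+2$. The worst case places the earliest and latest detection events at temporal separation $Q^n+2$. Then the oldest defect waits at most $2(Q^n+2)$ before the youngest is old enough; while waiting, it is deferred by being pushed forward in time, which does not move it spatially. At commit time the decoder ungauges a disk of radius at most $Q^n+2$ containing the cluster. This terminates the absorbing region, and preservation of fusion under ungauging (Fig.~\ref{fig:domain-wall}b) shows the cluster is neutral, because it was created by a local operator. To be robust against measurement errors at the boundary detectors, the decoder stays ungauged for a further $Q^n+2$ steps, then corrects and re-gauges. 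The total lifetime is therefore at most $3(Q^n+2)$.

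\textbf{Case (b): an absorber within distance about $Q^n+2$.} The cluster may then be paired with the absorber, so the relevant ``size'' becomes the distance to the absorber. The ungauged region can have extent up to $2(Q^n+2)$, which gives an extra delay of $Q^n+2$ in the commit wait and a longer stay in the ungauged phase. Summing the waiting time, the extra delay and the ungauged dwell time gives a lifetime of at most $5(Q^n+2)$.

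\textbf{Spatial bound.} All correction operators act inside the ungauged disk, which has extent at most $2(Q^n+2)$ around the cluster, plus $O(1)$ padding for the local stabilizers and the gauging circuit. Temporally, the deferrals push defects at most $4Q^n+O(1)$ beyond the cluster's own extent. Collecting these terms should give the $4Q^n+7$ neighbourhood stated in Lemma~\ref{lem:isolated-chunk}. I expect the exact constant to follow from adding $2\cdot 2$ for syndrome spread on both sides and a few unit margins; I would check it against the figure's geometry.

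\textbf{Main obstacle.} The delicate step is showing that in the non-neutral branch of the decoder (case (ii), ``defer and expand'') the search never grows unboundedly for an isolated nugget. This requires that the decoder's cluster grouping on the syndrome matches $B_{n,\alpha}$. The grouping could be too small: a sub-cluster might look old enough, be ungauged, be found non-neutral, and have to expand. I would argue that each such expansion at most doubles the region up to size $Q^n+2$. The process must terminate within the same time budget, because isolation rules out any foreign defect within $(1/3)Q^{n+1}\gg Q^n$. This is where $Q\geq 6$, or a larger choice of $Q$, enters.
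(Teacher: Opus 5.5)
\textbf{Gap: the case split is drawn at the wrong distance, and the $4Q^n+7$ bound is never derived.}

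Your skeleton is the paper's. You split according to whether an absorber is nearby, place the defects extremally, and add the commit wait to the dwell time in $D(\mathbb{Z}_3)$. Your case (a) count, $2(Q^n+2)+(Q^n+2)=3(Q^n+2)$, is exactly the paper's case (1). The problem is where you draw the line between the cases.

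The paper splits on whether $F_{n,\alpha}$ is $3(Q^n+2)$-separated from every absorber, and the commit rule forces that threshold. A defect may be merged into an absorber as soon as its age reaches its distance to that absorber. In case (a) the oldest defect lives for $3(Q^n+2)$. Only a $3(Q^n+2)$ separation therefore justifies the claim that the cluster is never merged into an absorber and is corrected purely among its own defects. The paper states this explicitly; you never argue it.

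With your split at ``about $Q^n+2$'', absorbers at distance between roughly $Q^n+2$ and $3(Q^n+2)$ are handled by neither analysis:
\begin{itemize}
\item Your case (a) bookkeeping fails for them, because the oldest defect can out-age that distance before the cluster is corrected, so the decoder may legitimately merge it into the absorber.
\item Your case (b) bookkeeping assumes the ungauged region has extent only $2(Q^n+2)$.
\end{itemize}

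This missing range is exactly what produces the first half of the lemma. You leave that half as ``should give \dots\ I would check it'', so as written the $4Q^n+7$ neighbourhood is not proved. In the paper's case (2), the absorber is within $3(Q^n+2)$ of the nugget, so every defect of $B_{n,\alpha}$ is within $4(Q^n+2)$ of it. Put the youngest defect farthest from the absorber. Then this $4(Q^n+2)$ scale governs both the extent of the ungauged region joining the cluster to the absorber and how long that defect must wait. That is where the $4Q^n+7$ neighbourhood and the $5(Q^n+2)$ lifetime come from. Adding ``$2\cdot 2$ for syndrome spread and a few unit margins'' to a temporal estimate does not replace this computation.

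\textbf{Two smaller points.}
\begin{itemize}
\item Lemma~\ref{lem:nugget-sep} only separates $F_{n,\alpha}$ from nuggets of level at least $n$. Defects of smaller nuggets can sit arbitrarily close; the later linking lemma exists to handle them. So ``isolated'' has to be a hypothesis excluding \emph{all} nearby errors. It is not a consequence of that lemma, and the lemma does not rule out foreign defects within $(1/3)Q^{n+1}$.
\item The paper's proof does not treat misgrouping at all; it assumes the decoder commits to $B_{n,\alpha}$ as a whole in the extremal configuration. If you raise the issue, your doubling sketch does not show that the extra ungauge--check--expand rounds fit inside $5(Q^n+2)$. Ages inside an ungauged region are measured from its ungauging boundary, so each round adds its own wait.
\end{itemize}
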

\begin{proof} We consider two cases: (1) $F_{n, \alpha}$ is at least $3(Q^n+2)$-separated from an absorber, and (2) $F_{n, \alpha}$ is $3(Q^n+2)$-connected to another absorber.
    
    (1) The defects belonging to $B_{n, \alpha}$ are at most $Q^n + 2$ apart, and must be collectively neutral since they were created by an isolated, local error $F_{n, \alpha}$. If $B_{n, \alpha}$ contains defects at extremal points (see Fig. \ref{fig:extremal-isolated-cluster}a), we will have to wait at most $Q^n + 2$ time steps after the youngest defect before we can commit to ungauging the cluster. Furthermore, we will have to wait another $Q^n+2$ time steps before we can pair the defects inside the ungauged region, as their age must now be measured relative to the ungauging boundary. So the correction of $F_{n, \alpha}$ lies within the $2Q^n + 3$ neighborhood of $B_{n, \alpha}$. The oldest defect in the cluster will have to wait for a time of $3(Q^n + 2)$ before being corrected. Since we are further than $3(Q^n+2)$ from any other absorber, it will never be permissible to merge the cluster into the absorber.

    (2) The defects created by $F_{n, \alpha}$ are neutral and lie in the $1$-neighborhood of $F_{n, \alpha}$. Thus, defects created by  $F_{n, \alpha}$ are at most $Q^n + 2$ apart and at most $4(Q^n + 2)$ separated from the other absorber. In the worst-case scenario, the youngest defect in the cluster is the furthest from the other absorber, and the correction will lie in the $4Q^n+7$ neighborhood of $B_{n, \alpha}$. This correction will take time $5(Q^n + 2)$ from the appearance of the first defect in $B_{n, \alpha}$. See Fig. \ref{fig:extremal-isolated-cluster}b for an illustration.

    Thus, the upper bound on the size of the error spread and the time it takes the decoder to correct all defects is given by $4(Q^n + 2)$.
\end{proof}

We have shown that for an \emph{isolated} level-$n$ nugget, all associated defects are corrected in time $t \leq 4(Q^n + 2)$, either by merging them to each other or with a nearby absorber. If we now have an error $E$ containing many nuggets of different sizes, the decoder may now pair defects originating from different nuggets, when smaller clusters occur close enough to larger ones. Suppose we have a level-$n$ cluster $B_{n, \alpha}$ we have decided to commit to ungauging. Smaller clusters up to level-$(n-1)$ might merge with $B_{n, \alpha}$ if it is permissible to do so before it becomes allowed to pair within the smaller cluster, since the defects in $B_{n, \alpha}$ are suddenly around the same age as the smaller cluster defects (see Fig. \ref{fig:linking}a). Our goal is to prove the decoder does not merge too many clusters into a logical error.

To start, we define the notion of \emph{linking}, which quantifies when smaller clusters may need to be merged with larger ones by the decoder. This happens when a small cluster is too close to the absorbing region created by the larger cluster (see Fig. \ref{fig:linking}a, b). 

\begin{figure}
    \centering
    \includegraphics[width=1\linewidth]{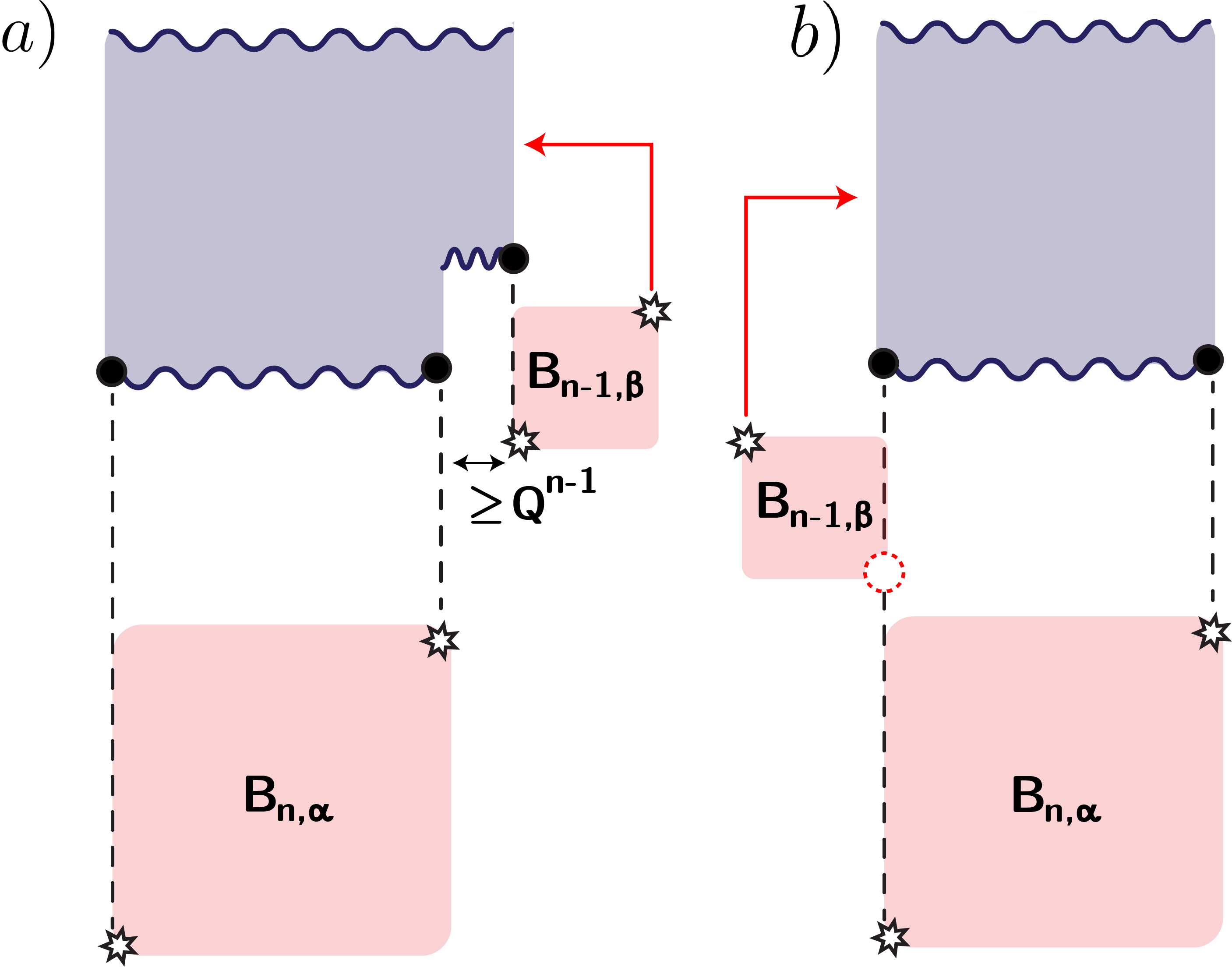}
    \caption{We illustrate two scenarios that cause a smaller cluster to link with a larger one. a) If a small cluster $B_{n-1, \beta}$ is closer than $Q^{n-1}$ to the ungauging boundary for a larger cluster $B_{n, \alpha}$, the decoder may pair an excitation in $B_{\beta}$ with one in $B_{\alpha}$. This is possible because any excitations in the ungauged region have age measured relative to the ungauging timestep. The remaining $B_\beta$ excitation must eventually be merged into $B_\alpha$ to complete the correction. b) As another linking example, consider a defect in the smaller cluster $B_{n-1, \beta}$ that coincides in space with a defect from $B_{n, \alpha}$. The defect in $B_{n-1,\beta}$ may be absorbed without lighting up a detector. This means the cluster $B_{n-1, \beta}$ is non-neutral, according to detector readings. The solution is to recognize that $B_{\alpha}$ and $B_{\beta}$ must be corrected together. }
    \label{fig:linking}
\end{figure}

\begin{lemma}[Linking]
    Suppose a level-$n$ cluster $B_{n, \alpha}$ associated with a nugget $F_{n, \alpha}$ lies in the $2(Q^{n} +2)$ neighborhood of the absorbing region associated with another cluster $B_{n+j, \beta}$, with $j \geq 1$. Once this absorbing region is terminated by ungauging, all defects in $B_{n, \alpha}$ will be included in the ungauged region within time $t \leq 3(Q^n + 2)$. We say $B_{n, \alpha}$ is \emph{\textbf{directly linked}} to $B_{n+j, \beta}$.
    
    If $B_{n, \alpha}$ lies outside of the $2(Q^{n} +2)$ neighborhood of the absorbing region, it will not be merged with the larger ungauged region. 
    
\end{lemma}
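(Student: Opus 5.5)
The plan is to argue directly from the commit rule of the just-in-time decoder, with the distance measured between $B_{n,\alpha}$ and the absorbing region of $B_{n+j,\beta}$. Let $t^*$ be the time at which that absorbing region is terminated by ungauging. I will first record two facts from the proof of Lemma~\ref{lem:isolated-chunk}. The defects of $B_{n,\alpha}$ lie in the $1$-neighborhood of the nugget $F_{n,\alpha}$, so they have mutual separation at most $Q^n+2$. They are also collectively neutral, because $F_{n,\alpha}$ is a local error.

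For the first claim, suppose $B_{n,\alpha}$ lies within distance $2(Q^n+2)$ of the absorbing region. After $t^*$, the ungauged $D(\mathbb{Z}_3)$ region with its domain wall is itself an absorber, and the ages of defects near or inside it are measured relative to the ungauging boundary. Take any defect $d \in B_{n,\alpha}$. The group consisting of $d$ together with the nearest boundary point has spacetime extent at most $2(Q^n+2)$. Under the commit rule, $d$ is put into `attempt to correct' once it has lived that long relative to the nearest absorber. The worst case is a defect that appears just as the ungauging happens and sits at the far side of $B_{n,\alpha}$.

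I will then bound the waiting time. The distance from such a defect to the boundary is at most $2(Q^n+2)$. On top of this, a delay of at most $Q^n+2$ is needed before the youngest member of $B_{n,\alpha}$ reaches the required age, exactly as in case (1) of Lemma~\ref{lem:isolated-chunk}. Together these give $t \le 3(Q^n+2)$, and at that time the decoder ungauges a region containing the whole cluster and merges it with the existing ungauged region. Neutrality of $B_{n,\alpha}$ guarantees that the merged region does not trigger further deferral on its account: any non-neutrality is attributable to $B_{n+j,\beta}$ or to other clusters.

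For the converse, suppose $B_{n,\alpha}$ lies outside the $2(Q^n+2)$ neighborhood. I will show that its own internal commit condition fires first. By the analysis of Lemma~\ref{lem:isolated-chunk}, all its defects become committable among themselves within time $Q^n+2$ of the youngest one, at size scale $Q^n+2$. Pairing to the absorber, by contrast, requires an age exceeding $2(Q^n+2)$. So the decoder ungauges a separate small disk, finds it neutral and corrects it, and this happens before a merge with the larger region is ever permitted. I must also check that the two ungauged disks do not overlap; this follows because the gap exceeds $2(Q^n+2)$ while the small disk has radius about $Q^n+2$.

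The main obstacle is the timing comparison in the converse. A defect of $B_{n,\alpha}$ may be very young relative to the newly created boundary, and I have to make sure the rule always prefers the smaller-scale pairing. I will first try to fix this by stating tie-breaking precisely: the decoder commits the smallest group satisfying the age condition. If that is insufficient, I will use Lemma~\ref{lem:nugget-sep} to exclude intermediate clusters that could bridge the gap.
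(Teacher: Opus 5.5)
Your proposal follows essentially the same argument as the paper. The bound $3(Q^n+2)$ is obtained as $2(Q^n+2)$ to reach the ungauging boundary plus one cluster diameter $Q^n+2$. The converse holds because the cluster's own commit condition is met within $2(Q^n+2)$ of its first defect, before a merge with an absorber more than $2(Q^n+2)$ away is permitted.

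The one difference is in the first part, and it matters for clusters that straddle the $2(Q^n+2)$ neighborhood. You assume every defect of $B_{n,\alpha}$ is within $2(Q^n+2)$ of the boundary, and you charge the extra $Q^n+2$ to the late appearance of the youngest defect. The paper assumes only that the nearest defects are that close, so the far ones may sit up to $3(Q^n+2)$ away. It picks these up in a second stage, a further $Q^n+2$ after the inner defects have been merged into the enlarged ungauged region. Adopting that two-stage accounting would cover the straddling case, which your direct-merge count does not.
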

\begin{proof}
    Suppose the boundary of the absorbing region is temporally and spatially separated from $B_{n, \alpha}$ by $2(Q^{n} +2)$. Then some other defects in $B_{n, \alpha}$ may be maximally separated from the boundary by $3(Q^{n} +2)$ (temporally and spatially). It will take time $2(Q^n + 2)$ to commit to ungauging the region connecting the inner defects to the boundary, and then a further time $Q^n + 2$ to commit to ungauging the region containing the outer defects. 

    On the other hand, if the boundary is further than $2(Q^{n} +2)$ away from $B_{n, \alpha}$, the decoder will commit to ungauging just $B_{n, \alpha}$ faster, within time $2(Q^{n} +2)$ from the first appearance of a defect in this cluster. 
\end{proof}

We want linking to be relatively benign, meaning that there are bounds on how many other clusters may be linked with a given cluster. Ideally, we want to be able to regroup nuggets and their associated clusters and still have a valid chunk decomposition at the end (with a different effective $Q$ perhaps). The following set of results put bounds on the size of $Q$ that lead to nice linking properties.

\begin{theorem}[Linking properties]
    Given $Q \geq 60$, the following all hold: 
    \begin{enumerate}
        \item A cluster $B_{n, \alpha}$ can not be directly linked to another cluster of the same size.
        \item A given cluster $B_{s, \alpha}$ is directly linked to only one larger cluster $B_{n, \beta}$ ($n > s$). 
        \item An absorbing region associated with a cluster of size $Q^n$ can only have one directly linked cluster $B_{p, \gamma}$ of each size $p \leq n$.
    \end{enumerate}
    \label{th:linking-properties}
\end{theorem}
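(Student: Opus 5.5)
The plan is to reduce all three statements to a comparison between two length scales. The first is the linking radius, the $2(Q^n+2)$ neighbourhood of an absorbing region given by the Linking Lemma. The second is the separation guaranteed by the Nugget Separation Lemma (Lemma~\ref{lem:nugget-sep}), which is at least $(1/3)Q^{\max(n,m)}$ between distinct nuggets. I also need a bound on the extent of the absorbing region of a level-$n$ cluster. By Lemma~\ref{lem:isolated-chunk}, the absorbing region of $B_{n,\alpha}$ lies inside the $4Q^n+7$ neighbourhood of $B_{n,\alpha}$, and $B_{n,\alpha}$ lies in the $1$-neighbourhood of $F_{n,\alpha}$. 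So the absorbing region sits within roughly $4Q^n+8$ of the nugget. If smaller clusters link in, the region can grow further, but only by their sizes, which form a geometric series in $Q^{-1}$. I will bound that growth by about $(4Q^n+8)(1+O(1/Q))$. The key inequality is then a triangle inequality: two clusters at levels $n$ and $m$ can only both be within linking distance of the same region if their nuggets are closer than about $5Q^{\max(n,m)}+O(Q^{\max(n,m)-1})$, using $2(Q^n+2)\le 2Q^n+4$. Against this I set the lower bound $(1/3)Q^{\max(n,m)+1}$. For $Q\ge 60$ we have $(1/3)Q^{n+1}\ge 20Q^n$, which comfortably beats the roughly $9Q^n$ worst-case sum, including the constant additive terms once $Q^n\ge1$. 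This arithmetic is where the constant 60 should come from, and I will check the constants only at the end.

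For item 1, I suppose $B_{n,\alpha}$ lies within $2(Q^n+2)$ of the absorbing region of another level-$n$ cluster $B_{n,\beta}$. That region lies within about $4Q^n+8$ of $F_{n,\beta}$. So $|F_{n,\alpha}-F_{n,\beta}|\le 6Q^n+O(1)$, which contradicts $|F_{n,\alpha}-F_{n,\beta}|\ge (1/3)Q^{n+1}\ge 20Q^n$. For item 2, I suppose $B_{s,\alpha}$ is directly linked to $B_{n,\beta}$ and $B_{n',\gamma}$ with $n,n'>s$ and, say, $n\le n'$. The triangle inequality through $B_{s,\alpha}$ gives
\[
|F_{n,\beta}-F_{n',\gamma}|\le (4Q^n+8)+(4Q^{n'}+8)+4(Q^s+2)+2(Q^s+2)+\mathrm{diam}(F_{s,\alpha}),
\]
which is at most about $9Q^{n'}$. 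This contradicts the separation $(1/3)Q^{n'+1}$, and it also covers the case $n=n'$. For item 3, I take two distinct clusters $B_{p,\gamma}$ and $B_{p,\delta}$ of the same level $p\le n$, both directly linked to the same level-$n$ absorbing region. If $p<n$, the argument has to avoid the large region's diameter. I will use that each is within $2(Q^p+2)$ of the region. That alone does not make them close to each other, so this is the main obstacle.

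For that obstacle, my first attempt is to use the time ordering imposed by the decoder rather than pure geometry. Linking only happens at the ungauging boundary, and by the Linking Lemma a linked cluster is absorbed within time $3(Q^p+2)$ after termination. I will then try to show that each linked level-$p$ cluster must be within $O(Q^p)$ of the youngest defects of the larger cluster, meaning the defects that fix the ungauging time. These lie in a set of diameter at most $Q^n+2$, and on the scale relevant at level $p$ they collapse to an essentially pointlike location. If that fails, the fallback is to read item 3 as ``one per nugget of size $p$ in a $Q^p$-neighbourhood'' and count using the separation lemma. The boundary has area of order $Q^{2n}$, while level-$p$ nuggets are $(1/3)Q^{p+1}$-separated. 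That gives a bounded count only after rescaling the effective $Q$, which I would note explicitly as the cost. I expect most of the effort to go into this step and into controlling how far linking enlarges absorbing regions; items 1 and 2 are direct.
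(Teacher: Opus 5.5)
Your argument for item~1 is fine: both nuggets there have level $n$, and the same-level separation $(1/3)Q^{n+1}$ beats $6Q^n+O(Q^{n-1})$. The gap is in the ``key inequality'' you use for nuggets of \emph{different} levels.

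Lemma~\ref{lem:nugget-sep} does not give $(1/3)Q^{\max(n,m)+1}$. Its displayed form has exponent $\max(n,m)$ with no $+1$. Its prose form gives $(1/3)Q^{\min(n,m)+1}$, i.e.\ a nugget is guaranteed to be far from larger nuggets only on its own scale. The prose form is the correct one. Take a lone error site at distance $Q$ from a level-$2$ nugget (two pairs of sites, each pair of diameter $\le Q/2$, total diameter $\le Q^2/2$). It has no partner within $Q/2$, so it is a legitimate level-$0$ nugget, yet it violates both $\max$ versions.

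Consequently your item~2 breaks down whenever $n<n'$. Your upper bound contains the term $4Q^{n'}+8$ from the extent of the larger absorbing region, and $4Q^{n'}\ge 4Q^{n+1}>(1/3)Q^{n+1}$, so no contradiction follows. Geometrically, $F_{n,\beta}$ may sit at distance of order $Q^{n+1}/3$ from $F_{n',\gamma}$, beside the worldlines of $B_{n',\gamma}$, with $B_{s,\alpha}$ in linking range of both regions.

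The paper's proof sidesteps this. It bounds $d(F_{s,\alpha},F_{n,\beta})$ and $d(F_{s,\alpha},F_{m,\gamma})$ each by $3(Q^s+2)$, in effect measuring linking from the nuggets rather than from the full absorbing regions. Then only $6(Q^s+2)$ has to beat the smaller-level separation $Q^{n+1}/3-2$, and that comparison is where the constant $60$ comes from. Because you (rightly) track the absorbing region's extent, you cannot reach that comparison by the triangle inequality alone. You would need an extra argument, for instance that in this configuration $B_{n,\beta}$ is itself linked to $B_{n',\gamma}$, together with a restatement of item~2 for the merged region.

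Item~3 is not proved. Your first attempt needs every linked level-$p$ cluster to lie within $O(Q^p)$ of the youngest defects of the larger cluster. Neither the decoder rules nor the Linking Lemma give this. Linking is defined relative to the whole absorbing region, i.e.\ all worldlines of the larger cluster. Fig.~\ref{fig:linking}b even shows a small cluster linking through a defect that coincides in space with one of the larger cluster's defects. That can happen anywhere along a worldline lasting up to $5(Q^n+2)$ by Lemma~\ref{lem:isolated-chunk}. Your fallback proves a counting statement, not item~3.

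For comparison, the paper's proof of item~3 is precisely the step you declined to take. It says both clusters lie within $2(Q^p+2)$ of the ungauging boundary, asserts that they are therefore within $4(Q^p+2)$ of each other, and derives a contradiction with the separation $Q^{p+1}/3-2$ once $Q\ge 43$. That tacitly treats the boundary as pointlike on scale $Q^p$, so your objection is legitimate and the paper supplies no further idea that would close it. Nothing in the preceding lemmas prevents two level-$(n-1)$ clusters a distance $Q^n/3$ apart from both lying within linking range of a region of extent of order $4Q^n$.

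If what you ultimately need are the later results, note that Lemma~\ref{lem:tree-size} uses only the depth bound from item~2, not item~3, so a counting version may suffice there. As a proof of the stated theorem, however, the proposal is incomplete.
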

\begin{proof}
    (1) The $2(Q^{n}+2)$ region around an absorbing region of size $Q^n$ is within the $3(Q^{n}+2)$ neighborhood of some nugget $F_{n, \beta}$. We want to make sure that this neighborhood cannot have overlap with a defect belonging to another nugget $F_{n, \alpha}$ of the same size. Since defects in $B_{n, \beta}, B_{n, \alpha}$ must be separated by $Q^{n+1}/3 - 2$, this requires that \begin{equation}
        Q^{n+1}/3 - 2 \geq 3(Q^{n}+2) ~\forall~ n \geq 0 \implies Q\geq 33
    \end{equation}
    
    (2) Suppose that we have two clusters $B_{n, \beta}$, $B_{m, \gamma}$, with $m \geq n > s$, and suppose that $B_{s, \alpha}$  is directly linked to both. Then, using the triangle inequality, we must have the following: 
    \begin{equation}
    \begin{aligned}
         d(F_{s,\alpha}, F_{n, \beta}) &\leq 3(Q^s + 2) \\
         d(F_{s,\alpha}, F_{m, \gamma}) &\leq 3(Q^s + 2) \\
         \frac{Q^{n+1}}{3} - 2\leq d(F_{n, \beta}, &F_{m, \gamma}) \leq 6(Q^s + 2) \implies Q \leq 60
    \end{aligned}
    \end{equation} For $Q \geq 60$, this will never be possible. 
    
    (3) Any linked clusters $B_{p, \gamma}$, $B_{p, \delta}$ must lie in the $2(Q^p+2)$ neighborhood of the ungauging boundary, which means they can be separated by at most $4(Q^p + 2)$. But by the chunk decomposition, they must be at least distance $Q^{p+1}/3 - 2$ apart. We require that for all $p \geq 0$: \begin{equation}
        Q^{p+1}/3 -2 > 4(Q^{p} + 2) ~\forall~ p < n \implies Q \geq 43
    \end{equation} 
Thus, the smallest $Q$ necessary for all these properties to hold is $Q = 60$.
\end{proof}

The above theorem allows us to define the notion of a \emph{linked tree}, which captures the notion that smaller clusters are linked to larger clusters that are themselves linked to even larger clusters, etc. We want to bound the depth of a linked tree that terminates at a given cluster size, as this will allow us to bound the length of time it takes to correct all clusters that take part in the linked tree. We can then apply techniques for proving thresholds for constant spread errors, having proven the constant spread for the linked trees (rather than the base clusters).

\begin{definition}[Linked Tree]
    The set of clusters $\mathbf{B} = \{B_{k_i, \alpha_i}\}$, $k_i \in \{ 0, \cdots, m\}$ is a \emph{\textbf{linked tree}} if it can be formed by starting with some cluster $B_{k_{\max}, \alpha}$ of size $Q^{k_{max}}$, with $k_{max} = \max\{k_i\}$ and adding all its directly linked clusters, and then all directly linked clusters to those directly linked clusters, etc. We denote such a linked tree $\mathbf{B}_{k_{\max}}$.
\end{definition}

\begin{lemma}
    A linked tree $\mathbf{B}_{k_{\max}}$ has a unique cluster $B_{k_{\max}, \alpha}$. All other constituent clusters are smaller.
\end{lemma}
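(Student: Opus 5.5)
The plan is to argue directly from the recursive construction of the linked tree in the Linked Tree definition, together with the Linking properties theorem (Theorem~\ref{th:linking-properties}).

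By definition, $\mathbf{B}_{k_{\max}}$ is generated from a root cluster $B_{k_{\max},\alpha}$ of size $Q^{k_{\max}}$. At each stage we add clusters that are directly linked to clusters already in the tree. The key observation is that the linking relation only ever adds strictly smaller clusters. In the Linking lemma, a cluster $B_{n,\alpha}$ is said to be directly linked to $B_{n+j,\beta}$ only for $j\geq 1$, so the linked cluster always has strictly smaller level than the cluster it attaches to. Theorem~\ref{th:linking-properties}(1) rules out the boundary case: no cluster can be directly linked to another cluster of the same level when $Q\geq 60$.

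The argument then proceeds by induction on the generation at which a cluster enters the tree. Generation $0$ is the root alone, at level $k_{\max}$. Suppose every cluster added at generation $g$ has level at most $k_{\max}$, and that every such cluster other than the root has level strictly below $k_{\max}$. A generation-$(g+1)$ cluster is directly linked to some generation-$g$ cluster, so its level is strictly smaller than that cluster's level, and hence strictly smaller than $k_{\max}$. Therefore every non-root cluster has level $<k_{\max}$. This shows that the root is the unique cluster of level $k_{\max}$, and it is consistent with $k_{\max}=\max\{k_i\}$.

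I expect the only subtlety to be whether linking might run "upward", that is, whether the process could add a larger cluster to which a tree member is linked. The definition avoids this, since it only adds clusters directly linked \emph{to} existing members. Even so, I would remark that Theorem~\ref{th:linking-properties}(2) ensures each cluster links to at most one larger cluster. Consequently each non-root cluster has a unique parent of strictly larger level, so the tree structure is well defined and no second cluster of size $Q^{k_{\max}}$ can be reached through any chain.
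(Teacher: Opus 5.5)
Your argument is correct, but it takes a different route from the paper's. You read the definition of a linked tree as a top-down generative process. Since `directly linked' is only defined from a level-$n$ cluster to a level-$(n+j)$ cluster with $j\ge 1$, an induction on the generation at which a cluster enters the tree shows that every non-root cluster has level strictly below $k_{\max}$. Under this reading the lemma is essentially built into the definition, and neither part (1) nor part (2) of Theorem~\ref{th:linking-properties} is needed for the main claim. The paper instead argues by contradiction, treating linking as an undirected connectedness relation. Two clusters of maximal level cannot be directly linked, by part (1). Any indirect chain of links between them would have to descend and then reascend in level, so some smaller cluster on the chain would be directly linked to two larger clusters, contradicting part (2). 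The paper's version buys a stronger statement: each connected component of the linking graph contains exactly one maximal cluster. This is what makes distinct linked trees disjoint objects that partition the clusters, which is used implicitly later (e.g.\ in Lemma~\ref{lem:tree-sep}, where disjoint trees are compared and assumed not linked to each other). Your closing remark about unique parents points at exactly this. Spelled out, the `valley' cluster on any chain joining two maximal clusters would have two parents, which is precisely the paper's argument.
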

\begin{proof}
    Suppose there was a linked tree with two largest clusters of the same size. By Lemma \ref{th:linking-properties}(1) these two clusters cannot directly linked. By Lemma \ref{th:linking-properties}(2) any cluster can only be linked to one cluster of a larger size. However, in order for the two largest clusters to be \emph{indirectly} linked, there must be a path of links along which some smaller cluster is linked to two larger clusters. So there is never a way for two clusters of the same size to be linked, even indirectly via a set of intermediate links. 
\end{proof}

\begin{lemma}
    A linked tree $\mathbf{B}_{k_{\max}}$ has diameter $|\mathbf{B}_{k_{\max}}| \leq 2(Q^{k_{\max}}+2) + 8 \left (\frac{Q^{k_{\max}} - 1}{Q-1}\right) + 16k_{\max}$. For $Q \geq 11$, we have $|\mathbf{B}_{k_{\max}}| \leq 3Q^{k_{\max}}$.  
    \label{lem:tree-size}
\end{lemma}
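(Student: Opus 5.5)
We argue by induction on the levels appearing in the linked tree, bounding how far the tree can extend beyond the root cluster $B_{k_{\max},\alpha}$. The root has diameter at most $Q^{k_{\max}}+2$, since defects lie in the $1$-neighbourhood of a nugget of diameter at most $Q^{k_{\max}}$. By Lemma~\ref{lem:isolated-chunk}, its absorbing region (worldlines up to ungauging and correction) is contained in a region of diameter $2(Q^{k_{\max}}+2)$. This gives the leading term.

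Each directly linked cluster $B_{p,\gamma}$ lies in the $2(Q^p+2)$ neighbourhood of the absorbing region it links to, by the Linking lemma. Its own defects span at most $Q^p+2$. Once it is merged, it enlarges the absorbing region by at most a constant multiple of $Q^p+2$ on each side. I would aim for an enlargement of $8Q^p+16$ in total diameter, i.e.\ $4(Q^p+2)$ on each side.

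The key structural input is Theorem~\ref{th:linking-properties}, in particular part (3) and the uniqueness lemma for the largest cluster. It says that at each level $p<k_{\max}$, at most one cluster of that size is directly linked to a given absorbing region. So the extension of the tree along any direction is controlled by a chain of linked clusters with strictly decreasing, distinct levels. We cannot simply sum over all clusters in the tree; instead we argue that the diameter grows by at most the contribution of one cluster per level, $p=0,\dots,k_{\max}-1$.

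Summing gives
\begin{equation}
|\mathbf{B}_{k_{\max}}| \leq 2(Q^{k_{\max}}+2) + \sum_{p=0}^{k_{\max}-1} (8Q^p+16) = 2(Q^{k_{\max}}+2) + 8\,\frac{Q^{k_{\max}}-1}{Q-1} + 16k_{\max},
\end{equation}
which is the claimed bound.

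For the second claim with $Q\geq 11$, we need
\begin{equation}
4 + 8\,\frac{Q^{k}-1}{Q-1} + 16k \leq Q^{k},
\end{equation}
which we bound using $\frac{8}{Q-1}Q^k \leq \frac{4}{5}Q^k$. This leaves $4+16k \leq \frac{1}{5}Q^k$, which can be checked directly for $k\geq 2$ and handled separately for small $k$.

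The main obstacle is the step justifying that only one cluster per level contributes to the diameter. Clusters linked to smaller linked clusters, rather than to the root, could in principle branch out. I would handle this by showing inductively that the absorbing region after merging all clusters of levels $\geq p$ is still contained in a set of diameter $D_p = 2(Q^{k_{\max}}+2)+\sum_{q\geq p}(8Q^q+16)$. Part (3) of Theorem~\ref{th:linking-properties}, applied to this growing merged region, ensures that only a single level-$(p-1)$ cluster attaches to it. Since the sum adds at most $O(Q^{k_{\max}-1})$, the merged region stays within the scale $Q^{k_{\max}}$ at which part (3) applies.

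The small-$k$ edge cases of the numerical inequality also need explicit checking. For $k=0$ it may require using that a level-$0$ tree is just a single cluster, whose diameter is at most $2(Q^{0}+2)$ by the direct bound rather than $3Q^0$.
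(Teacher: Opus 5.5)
\paragraph{First bound.} The step that fails is your use of Theorem~\ref{th:linking-properties}(3) on the growing merged region. You use it to conclude that only one level-$(p-1)$ cluster attaches at each stage, i.e.\ that the tree has at most one cluster per level. Part (3) is stated for the absorbing region of a single cluster. Your remark that the merged region ``stays within the scale $Q^{k_{\max}}$'' does not license extending it. The only separation input you have is that two level-$p$ clusters are $Q^{p+1}/3$ apart. That rules out two of them attaching within $O(Q^p)$ of the same point, but not near far-apart points of a region of diameter $\sim Q^{k_{\max}}\gg Q^{p+1}$. So a global one-per-level count is not available, and a larger region only makes this worse.

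You also do not need that count. The paper argues along paths using part (2). Every cluster is directly linked to a unique larger cluster, so it is joined to the root by a chain of strictly increasing levels, each step moving at most $4(Q^j+2)$. Every point of the tree is therefore within $\sum_{j<k_{\max}}4(Q^j+2)$ of the root's absorbing region, and doubling gives the bound however much the tree branches. In your level-by-level language: the parent of each level-$(p-1)$ cluster has level $\geq p$, so it lies in the merged region $R_p$. Hence $R_{p-1}$ sits inside the $4(Q^{p-1}+2)$-neighbourhood of $R_p$, and its diameter grows by at most $8(Q^{p-1}+2)$ no matter how many clusters attach. With that replacement your induction proves the first bound and is essentially the paper's argument.

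\paragraph{Second bound.} Reducing to $4+8\tfrac{Q^k-1}{Q-1}+16k\le Q^k$ is right. However, ``checked directly for $k\ge 2$'' is false at $Q=11$: for $k=2$ your reduced inequality $4+16k\le\tfrac15 Q^k$ would need $36\le 121/5$. More importantly, the target inequality itself fails at $Q=11$ for $k=0,1,2$. It requires $Q\geq 28$ when $k=1$ and $Q\geq 12$ when $k=2$. It never holds when $k=0$, since the first bound gives $6>3=3Q^0$. So the small-$k$ cases you defer cannot be settled from the first bound at all.

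Your estimate $8\tfrac{Q^k-1}{Q-1}\le\tfrac45 Q^k$ does give $|\mathbf{B}_k|\le 3Q^k$ for all $k\geq 3$ when $Q\geq 11$, and for all $k\geq 1$ once $Q\geq 28$. That is all one can extract. The paper's own derivation of ``$Q\geq 11$'' loses the factor $Q-1$ on the $16k+4$ term when clearing the denominator. The statement therefore needs this restriction on $k$, or a level-$0$ bound of $6$ rather than $3$.
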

\begin{proof}
    By Lemma \ref{th:linking-properties}(2), we know that a linked tree can only contain monotonically decreasing linked clusters, since a given cluster can only be directly linked to one larger cluster. Thus, given $k_{\max}$, the maximal `depth' of the linked tree is $k_{\max}$, supposing there is a path of links from the largest cluster to a cluster of size $0$, including all possible intermediate cluster sizes. 
    
    Since each cluster $B_{j, \alpha_j}$ along the path must be within the $3(Q^j +2)$ neighborhood of the next largest cluster $B_{j+1, \alpha_{j+1}}$ (it must be within $2(Q^j+2)$ of the absorbing region belonging to $B_{j+1, \alpha_{j+1}}$), $B_{j, \alpha_j}$ may extend the size of the neighborhood containing the linked tree by $4(Q^j + 2)$. Summing from $j=0$ to $j=k_{max}-1$ yields:
    \begin{equation}
        \sum_{j = 0}^{k_{\max}-1} 4Q^j + 8 = 4\left ( \frac{Q^{k_{\max}}-1}{Q-1}\right ) + 8 k_{\max}
    \end{equation} Since the temporal boundary belonging to the largest cluster $B_{k_{\max}}$ is itself within the $2(Q^{k_{\max}}+2)$ neighborhood of $B_{k_{\max}}$, the diameter of the linked tree is bounded by: \begin{equation}
        |\mathbf{B}_{k_{\max}}| \leq 2(Q^{k_{\max}}+2) + 8\left ( \frac{Q^{k_{\max}}-1}{Q-1}\right ) + 16 k_{\max}
    \end{equation}
    We can simplify our bound a bit, since we have the following: 
    \begin{equation}
    \begin{aligned}
        8 &\left (\frac{Q^k - 1}{Q-1} \right ) + 16k + 4 < Q^k \\
        &\implies 9 Q^{k} + 16k - 4 < Q^{k+1} \\
        &\implies Q \geq 11
    \end{aligned}
    \end{equation} So for $Q \geq 11$, we have the desired bound on $|\mathbf{B}_{k_{\max}}|$.
\end{proof}

\begin{lemma}
    If $Q \geq 60$ then two \emph{disjoint} linked trees $\mathbf{B}_{n}$ and $\mathbf{B}_m$ with $m \geq n$ are separated by at least $\frac{1}{4}Q^{n+1} - 2$.
    \label{lem:tree-sep}
\end{lemma}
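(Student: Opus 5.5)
The plan is to reduce the separation of two disjoint linked trees to the separation of their constituent nuggets, given by Lemma~\ref{lem:nugget-sep}, together with the diameter control used in Lemma~\ref{lem:tree-size}. Write $\mathbf{B}_n$ with root cluster $B_{n,\alpha}$ and $\mathbf{B}_m$ with root $B_{m,\beta}$, where $m\geq n$. Pick any pair of clusters $B_{k,\gamma}\in\mathbf{B}_n$ and $B_{l,\delta}\in\mathbf{B}_m$ that realise the minimum distance between the trees. By the uniqueness lemma we have $k\leq n$ and $l\leq m$. The goal is then to show $|B_{k,\gamma}-B_{l,\delta}|\geq \frac14 Q^{n+1}-2$.

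\textbf{Step 1.} Since disjoint linked trees share no cluster, $B_{k,\gamma}$ and $B_{l,\delta}$ come from distinct nuggets. Lemma~\ref{lem:nugget-sep} gives $|F_{k,\gamma}-F_{l,\delta}|\geq \frac13 Q^{\max(k,l)+1}$, and passing to defects in the $1$-neighbourhood costs $2$. If $\max(k,l)\geq n$, this immediately gives at least $\frac13 Q^{n+1}-2\geq\frac14 Q^{n+1}-2$, and we are done.

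\textbf{Step 2.} The remaining case is $k,l<n$, where the nugget bound alone is too weak. Here I use the tree geometry instead. The clusters $B_{k,\gamma}$ and $B_{l,\delta}$ lie within distance roughly $|\mathbf{B}_n|\leq 3Q^n$ of the root nugget $F_{n,\alpha}$ and of the root $F_{m,\beta}$ respectively. More precisely, I will use the per-level accumulation $\sum_j 4(Q^j+2)$ from the proof of Lemma~\ref{lem:tree-size}, so the distance from $B_{k,\gamma}$ to its root is bounded by about $\frac{4}{Q-1}Q^n+O(n)$. This is far smaller than $3Q^n$, because the $2(Q^n+2)$ absorbing-region term does not enter when measuring from the nugget itself.

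Now argue by contradiction. Suppose the trees are closer than $\frac14 Q^{n+1}-2$. Then, by the triangle inequality, the two roots $F_{n,\alpha}$ and $F_{m,\beta}$ would be within $\frac14 Q^{n+1}+O(Q^n)$ of each other. Lemma~\ref{lem:nugget-sep} demands $\frac13 Q^{n+1}$. The gap $\frac1{12}Q^{n+1}$ must dominate the $O(Q^n)$ path lengths, which is where a condition like $Q\geq 60$ is used.

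\textbf{Step 3.} There is a second route for closeness that Step 2 does not cover. A small cluster of one tree could lie within $2(Q^k+2)$ of the absorbing region of a cluster in the other tree. If that happened, it would be directly linked there, and by the definition of linked trees the two trees would merge, contradicting disjointness. So every non-root cluster of $\mathbf{B}_n$ must stay farther than $2(Q^k+2)$ from every absorbing region in $\mathbf{B}_m$, and vice versa. I would use this to exclude the configurations where Step 2's triangle inequality is not tight enough.

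\textbf{Main obstacle.} The hard step is the bookkeeping in Step 2: making sure the accumulated offsets ($4(Q^j+2)$ per level, plus the absorbing-region radius $2(Q^n+2)$ if it must be included) stay below $\frac1{12}Q^{n+1}$ for all $n$ under $Q\geq 60$. I would first try the crude bound $3Q^n$ per tree. This gives $6Q^n\leq\frac1{12}Q^{n+1}$, which needs $Q\geq 72$. If that fails at $Q=60$, I would refine to the nugget-centred sum $\frac{4Q^n}{Q-1}+O(n)$, which comfortably fits.
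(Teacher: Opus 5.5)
There is a genuine gap, and it starts with how you apply Lemma~\ref{lem:nugget-sep}. The separation between two distinct nuggets is set by the \emph{smaller} level. A level-$k$ nugget is guaranteed to be $\tfrac13 Q^{k+1}$ away only from nuggets of level $\geq k$. This is what the lemma's own text says, what the chunk-decomposition figure says (``smaller boxes may be closer''), and what Bravyi--Haah prove. The displayed formula's $\max$ is inconsistent with the text, and it has no ``$+1$'' anyway. Concretely, a single isolated fault, i.e.\ a level-$0$ nugget, can sit at distance of order $Q$ from a level-$n$ nugget. So your bound $\tfrac13 Q^{\max(k,l)+1}$ is false, and Step~1 only settles the case $k=n\leq l$. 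The case $k<n\leq l$ is easily repaired with your own Step~2 device: go from $B_{k,\gamma}$ to the root $F_{n,\alpha}$ at cost $O(Q^n)$, then use $|F_{n,\alpha}-F_{l,\delta}|\geq\tfrac13Q^{n+1}$, which holds because $\min(n,l)=n$. The case $k=n$, $l<n$ cannot be repaired this way, and it hits the same obstruction as Step~2.

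In Step~2 you route $B_{l,\delta}\in\mathbf{B}_m$ through the root $F_{m,\beta}$ and count that leg as $O(Q^n)$. It is only controlled by the size of $\mathbf{B}_m$. Your nugget-centred sum gives about $\tfrac{4}{Q-1}Q^m$, and a chain of links through a level-$(m-1)$ cluster can make it genuinely of order $Q^{m-1}$. For $m\geq n+2$ this exceeds $4Q^{n+1}$, far beyond your margin $\tfrac1{12}Q^{n+1}$, while the two roots are only guaranteed to be $\tfrac13Q^{n+1}$ apart. No choice of $Q$ fixes this. Your ``crude bound $3Q^n$ per tree'' makes the same error, since $\mathbf{B}_m$ has diameter up to $3Q^m$. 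Step~3 does not help either, because non-linking only yields separations of order $Q^k$.

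The missing idea, which is the heart of the paper's proof, is to climb the linking chain from $B_{l,\delta}$ only as far as the \emph{first} cluster $B_{p,\xi}$ with $p\geq n$. This cluster exists because the root has level $m\geq n$. By Theorem~\ref{th:linking-properties}(2), levels strictly increase along the chain, so every intermediate cluster has level $<n$ and $d(B_{l,\delta},B_{p,\xi})=O(Q^{n-1})$; the paper takes $3Q^{n-1}$. Now nugget separation applies with $\min(n,p)=n$, and the triangle inequality through the root of $\mathbf{B}_n$ gives $d(\mathbf{B}_n,B_{l,\delta})\geq\tfrac13Q^{n+1}-2-2Q^n-3Q^{n-1}\geq\tfrac14Q^{n+1}-2$ for $Q\geq 26$. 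This covers $k=n$ and $k<n$ at once.
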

\begin{proof}
    We know that the linked trees contain unique largest clusters of sizes $n, m$ respectively. Let these clusters be $B_{n,\alpha}$ and $B_{m, \beta}$. These clusters must be separated by at least $Q^{n+1}/3 - 2$. We also require that they are not linked with each other. To ensure this, since by Lemma \ref{lem:tree-size} their associated linked trees have diameter less than three times the size of the largest clusters, we require that: \begin{equation}
    \begin{aligned}
        d(\mathbf{B}_n, B_{m, \beta}) &\geq d(B_{n, \alpha}, B_{m, \beta}) - 2Q^{n} \\
        &\geq \frac{Q^{n+1}}{3} - 2 - 2Q^n \\
        &\geq \frac{\gamma}{3}Q^{n+1} - 2 \implies (1-\gamma)Q \geq 6
    \end{aligned}
    \end{equation} Alternately, we must have $\gamma \leq 1 - 6/Q$. We will just set $\gamma = 3/4$, since we must definitely have $Q > 24$ for the rest of the proof to hold. We also want to make sure this separation is much larger than the diameter of $\mathbf{B}_n$ itself. Conservatively, we can achieve this by satisfying \begin{equation}
        \frac{1}{4} Q^{n+1} - 2 > 9Q^n \implies Q \geq 44
    \end{equation} 
    
    We also want to demonstrate that any cluster $B_{s, \eta} \in \mathbf{B}_m$ must also be similarly far separated from $\mathbf{B}_n$. The above proof suffices for $s \geq n$. For $s < n$, we leverage the fact that $B_{s, \eta}$ must be linked (directly or not) to a cluster of size $p \geq n$. Otherwise, it would not belong to $\mathbf{B}_m$ with $m \geq n$. 
    
    Let the $B_{p, \xi}$ be the smallest cluster linked to $B_{s, \eta}$ while maintaining $p \geq n$. We can bound the distance from $B_{p, \xi}$ to $B_{s, \eta}$ using the monotonicity of cluster sizes along any linked path. Since $p \geq n$, $B_{s, \eta}$ cannot be further than $3Q^{n-1}$ from $B_{p, \xi}$ since we picked the smallest $p$ possible, and so only other chunks of size less than $n$ may link $B_{s, \eta}$ and $B_{p, \xi}$. Thus we have: \begin{equation}
        \begin{aligned}
            d(\mathbf{B}_n, B_{s, \eta}) &\geq d(\mathbf{B}_n, B_{p, \xi}) - 3Q^{n-1} \\
            &\geq d(B_{n, \alpha}, B_{p, \xi}) - 2Q^n - 3Q^{n-1} \\
            &\geq \frac{1}{3}Q^{n+1} - 2 - 2Q^n - 3Q^{n-1} \\
            &\geq \frac{1}{4}Q^{n+1} - 2 \implies Q \geq 26
        \end{aligned}
    \end{equation} So any chunk in $\mathbf{B}_m$ is at least $(1/4)Q^{n+1} - 2$ separated from $\mathbf{B}^n$, proving the desired separation between $\mathbf{B}_m$ and $\mathbf{B}_n$.
\end{proof}

Additionally, we know that the diameter of a tree is bounded, and so the decoder must correct an isolated tree in bounded time:
\begin{lemma}
    An isolated linked tree $\mathbf{B}_n$ is corrected by the decoder in time $t \leq 4(3Q^n + 2)$.
\end{lemma}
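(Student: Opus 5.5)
The plan is to treat the isolated linked tree $\mathbf{B}_n$ as a single effective error cluster and reuse the argument of Lemma~\ref{lem:isolated-chunk}, with the nugget diameter $Q^n+2$ replaced by the tree diameter. By Lemma~\ref{lem:tree-size}, for $Q \geq 11$ the whole tree, including the absorbing regions of its constituent clusters, fits in a spacetime region of diameter at most $3Q^n$. Adding the $1$-neighborhood that separates defects from their generating errors, every defect of every cluster in $\mathbf{B}_n$ lies within a box of diameter $D := 3Q^n+2$.

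The first step is to show the decoder never pairs a defect of $\mathbf{B}_n$ with anything outside it. Lemma~\ref{lem:tree-sep} gives a separation of at least $\tfrac14 Q^{n+1}-2$ from any other tree. For $Q \geq 44$ this exceeds $9Q^n$, which is larger than any waiting window of the form $3D$ or $4D$ generated by the tree itself. So by the second half of the Linking lemma, no outside cluster is swept into the tree's ungauged region, and the tree can be analysed in isolation.

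The second step bounds the timeline, following case (2) of Lemma~\ref{lem:isolated-chunk}. The worst case has the youngest defect at the extremal point of the tree, farthest from the relevant absorber. Committing to ungauge the region containing all defects takes at most about $2D$ after the first defect appears: up to $D$ to wait for the youngest defect, and up to $D$ more while it ages enough to reach the farthest absorber or constituent. Ungauging resets ages relative to the ungauging boundary. We then wait at most another $D$ in the $D(\mathbb{Z}_3)$ phase before the neutrality check and pairing are committed. Neutrality holds because the tree is a union of neutral nuggets together with any absorbed charges, all of which lie inside the ungauged region; this is where Lemma~\ref{th:linking-properties}(2) is used, since each constituent links to exactly one larger cluster. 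This sequence suggests $3D$. To be safe, I would follow Lemma~\ref{lem:isolated-chunk}(2) and allow one extra $D$ for the case where the ungauged region must be expanded to twice the diameter, giving $t \leq 4D = 4(3Q^n+2)$.

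The main obstacle is the cascading of links inside the tree. A small cluster $B_{j}$ may only be pulled into the ungauged region after the larger cluster it links to has itself been ungauged. This could chain up the depth of the tree and produce delays that add across levels. I would control this using the Linking lemma: each directly linked $B_j$ joins within $3(Q^j+2)$ of its parent's ungauging. Summed over the monotone path guaranteed by Lemma~\ref{th:linking-properties}(2), this gives a geometric series dominated by the same quantity $8\frac{Q^n-1}{Q-1}+16n$ that was already absorbed into $3Q^n$ in the proof of Lemma~\ref{lem:tree-size}. The extra delay therefore fits inside the slack between $3D$ and $4D$.
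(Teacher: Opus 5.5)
Your core step is the paper's entire proof. You bound the tree's diameter by $3Q^n$ via Lemma~\ref{lem:tree-size} and rerun Lemma~\ref{lem:isolated-chunk} with $Q^n+2$ replaced by $D=3Q^n+2$. The paper does exactly this: it substitutes $3Q^n$ into the bound $4(Q^n+2)$ stated in the last line of the proof of Lemma~\ref{lem:isolated-chunk}. Your concern about cascading links is legitimate, and that substitution does not address it. However, two of the steps you add on top of the substitution do not hold as written.

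\textbf{The isolation step.} It is unnecessary, since isolation is the hypothesis. It is also arithmetically wrong: $9Q^n$ is smaller than both $3D=9Q^n+6$ and $4D=12Q^n+8$. So $Q\geq 44$ does not make the inter-tree separation exceed the correction window. Beating $4D$ requires $\tfrac14 Q^{n+1}-2>12Q^n+8$, i.e.\ $Q\geq 89$. Using the window to establish isolation is also circular, because the window is what you are trying to prove. In the paper, non-merging of trees is Theorem~\ref{th:just-in-time-decoding}, which is deduced \emph{from} this lemma.

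\textbf{The slack bookkeeping.} You spend the single slack $D$ between $3D$ and $4D$ twice: once on the doubled ungauging region and once on the cascade. In addition, case~(2) of Lemma~\ref{lem:isolated-chunk} costs $5(Q^n+2)$, which is two extra units rather than one. Substituting $D$ there would give $5D$ and overshoot the claim.

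\textbf{A repair.} Do the cascade count you sketch directly, without the tree-wide $3D$ timeline:
\begin{itemize}
\item The largest cluster costs at most $5(Q^n+2)$ by Lemma~\ref{lem:isolated-chunk}.
\item Any chain of links adds at most $\sum_{j<n}3(Q^j+2)<Q^n$, by the inequality in the proof of Lemma~\ref{lem:tree-size}. This holds because sizes strictly decrease along a chain and separate branches proceed in parallel.
\item Linked clusters that appear before the largest one contribute at most a further $Q^n$.
\end{itemize}
The total is then at most $7Q^n+10$, well below $4(3Q^n+2)=12Q^n+8$.
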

\begin{proof}
    Using Lemma \ref{lem:tree-size}, the diameter of $\mathbf{B}_n$ is bounded by $3Q^n$. We can plug this into the result of Lemma \ref{lem:isolated-chunk} to find the desired result.
\end{proof} 

We are now ready to prove that the decoder works given a large enough $Q$.
\begin{theorem}
    Given $Q \geq 89$, the decoder corrects all defects associated with a linked tree $\mathbf{B}_n$ by merging with other defects in $\mathbf{B}_n$. 
\label{th:just-in-time-decoding}
\end{theorem}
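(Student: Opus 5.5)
The plan is to argue by strong induction on the level $n$ of the largest cluster in the linked tree $\mathbf{B}_n$. We show that every defect of $\mathbf{B}_n$ is fused only with other defects of $\mathbf{B}_n$, or with the ungauged region opened for $\mathbf{B}_n$, and never with a defect or absorber belonging to a disjoint linked tree. The key quantitative input is a comparison of two scales: the spacetime region the decoder touches while handling $\mathbf{B}_n$, and the separation between disjoint trees from Lemma~\ref{lem:tree-sep}.

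\textbf{Base case and footprint.} A level-$0$ tree is a single isolated cluster, so Lemma~\ref{lem:isolated-chunk} applies directly. For general $n$, we view $\mathbf{B}_n$ as a single effective cluster of diameter at most $3Q^n$, by Lemma~\ref{lem:tree-size}. Repeating the argument of Lemma~\ref{lem:isolated-chunk} with $Q^n+2$ replaced by $3Q^n+2$ gives the following. All defects of $\mathbf{B}_n$ are committed, ungauged, found neutral and re-gauged within time $4(3Q^n+2)$, by the lemma on isolated trees. Moreover, the ungauged region and the correction stay inside the $4(3Q^n+2)$-neighbourhood of $\mathbf{B}_n$, with the $+7$ slack absorbed.

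Neutrality is what lets the decoder close out in this way. The union of the nuggets in $\mathbf{B}_n$ is a local error, so once the whole tree sits inside one ungauged disk, its total charge is trivial. Ungauging preserves fusion outcomes, so the decoder sees a neutral configuration and does not expand further. If the region contains a computational anyon $c$, the total charge is instead $c$, which is again recognised as correct.

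\textbf{Separation versus footprint.} Next we check that this footprint cannot reach another tree $\mathbf{B}_m$ with $m\ge n$. By Lemma~\ref{lem:tree-sep} the two trees are at least $\tfrac14 Q^{n+1}-2$ apart. The decoder's activity for $\mathbf{B}_n$ extends at most about $4(3Q^n+2)$ beyond $\mathbf{B}_n$. It must also stay outside the linking neighbourhood of $\mathbf{B}_m$'s absorbing region, which has width $2(3Q^n+2)$ in the worst case, together with the $3Q^n$ spread of $\mathbf{B}_n$ itself. This gives a condition of the form
\begin{equation}
\tfrac14 Q^{n+1} - 2 > 4(3Q^n+2) + 2(3Q^n+2) + 3Q^n + O(1),
\end{equation}
that is, roughly $\tfrac14 Q > 21$ up to lower-order terms. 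Tracking the additive constants carefully at $n=0$ should pin the bound down to $Q\ge 89$. If both trees' correction windows must be accounted for, I would instead sum the $4(3Q^n+2)$ footprints on both sides.

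\textbf{Smaller trees and the induction.} Smaller trees $\mathbf{B}_k$ with $k<n$ are handled by the same inequality with the roles swapped, together with the induction hypothesis. Each such tree is corrected internally and quickly, so its absorbing region is terminated before it can link to $\mathbf{B}_n$. Since $\mathbf{B}_k$ is disjoint from $\mathbf{B}_n$, it is by definition not directly linked to any member of $\mathbf{B}_n$. Hence no defect of $\mathbf{B}_k$ lies within $2(Q^k+2)$ of an absorbing region of $\mathbf{B}_n$, and by the second half of the Linking lemma it is not merged.

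\textbf{Main obstacle.} The hardest step is justifying that the Linking lemma and Lemma~\ref{lem:isolated-chunk} transfer from single clusters to whole trees. During the time a tree is being handled, its absorbing region grows as smaller linked clusters are absorbed. I would control this by bounding the region's growth at each linking step, using the geometric series already summed in Lemma~\ref{lem:tree-size}. The aim is to show that the total absorbing region never exceeds the $4(3Q^n+2)$-neighbourhood, so the separation inequality above closes the argument.
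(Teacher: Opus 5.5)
Your strategy is the paper's: bound the spacetime footprint of the decoder's activity on $\mathbf{B}_n$ by $4(3Q^n+2)$, using Lemma~\ref{lem:isolated-chunk} with the diameter bound of Lemma~\ref{lem:tree-size}, and compare it with the separation $\tfrac14 Q^{n+1}-2$ of Lemma~\ref{lem:tree-sep}. The strong induction is superfluous: every pair of disjoint trees has a smaller member, and the comparison for that member suffices.

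The gap is in the quantitative step, which is the entire content of the theorem. The inequality you actually propose is $\tfrac14 Q^{n+1}-2>4(3Q^n+2)+2(3Q^n+2)+3Q^n+O(1)$. It is binding at $n=0$, where it reads $\tfrac14 Q-2>33+O(1)$, i.e.\ $Q>140$. Tracking the additive constants at $n=0$ therefore makes the requirement larger, not equal to $89$; your heuristic $\tfrac14 Q>21$ is only the large-$n$ limit. Your fallback of summing two footprints gives $\tfrac14 Q-2>40$, i.e.\ $Q>168$. As written, the argument proves the statement only for substantially larger $Q$ and says nothing for $89\le Q\le 140$.

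What is missing is that, in the paper's accounting, your extra terms are double counting. The bound $4(3Q^n+2)$ is already a neighbourhood of $\mathbf{B}_n$ itself, and the separation in Lemma~\ref{lem:tree-sep} is likewise measured from $\mathbf{B}_n$, so the tree's own spread $3Q^n$ must not be added again. Case (2) of Lemma~\ref{lem:isolated-chunk} already covers merging into a nearby absorber, which is what your linking term $2(3Q^n+2)$ was meant to handle. Disjoint linked trees are by construction not directly linked to each other. So by the second half of the Linking lemma, the larger tree's absorbing region never pulls in $\mathbf{B}_n$. The only way two trees can interact is for the smaller tree's correction footprint to reach the larger one.

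The paper's proof is therefore the single inequality $\tfrac14 Q^{n+1}-2>4(3Q^n+2)$, equivalently $Q^n(\tfrac14 Q-12)>10$. This fails at $n=0$ for $Q=88$ and holds for every $n\ge 0$ once $Q\ge 89$, since $\tfrac14\cdot 89-12=10.25>10$. Replacing your display with this one closes the argument at the stated threshold.
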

\begin{proof}
    We need to satisfy the following inequality:
    \begin{equation}
        \frac{1}{4}Q^{n+1} - 2 > 4(3Q^n + 2) \implies Q \geq 89
    \end{equation} In this regime, the separation between $\mathbf{B}_n$ and any larger tree $\mathbf{B}_m$ is larger than the time it takes to correct $\mathbf{Q}^n$, so the decoder will never merge two disjoint linked trees.
\end{proof}

We now consider the decoding of $\eta$ anyons. What we must prove is that the density of $\eta$ anyons created during the just-in-time decoding is low enough to allow for global decoding to succeed. The most straightforward way to proceed is to assume we will also use an RG-type decoder for the $\eta$ anyons. If we can show that the $\eta$ syndromes have a clustering property, we can ensure that an RG-type decoder will succeed\cite{Bravyi_2013}.

The extra $\eta$ anyons will only be created in the vicinity of error clusters that have been ungauged and re-gauged during the just-in-time decoding process. In particular, they will be restricted to some neighborhood of the linked trees $\{\mathbf{B}_n\}$. Given that the linked trees obey a cluster decomposition given large enough $Q$ (small enough error rate), we can guarantee an RG-type decoder \cite{Bravyi_2013} will suffice to correct the residual $\eta$ anyons. 

\begin{theorem}[Decoding $\eta$ anyons globally]
    Erroneous $\eta$ anyons and $\eta$ anyons created as byproducts of the just-in-time decoding process can be reliably corrected by a global spacetime RG decoder at the end of computation. 

    In essence, we want clusters of $\eta$ anyons to form a chunk decomposition of their own. We can show that clusters of $\eta$ anyons are supported on disjoint regions  $\mathbf{A}_n$ such that, for $Q > 352$, 
    \begin{equation}
        d(\mathbf{A}_{\alpha, n}, \mathbf{A}_{\beta, n}) \geq \frac{1}{8}Q^{n+1}.
    \end{equation}
\end{theorem}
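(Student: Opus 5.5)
The plan is to build the $\eta$ regions $\mathbf{A}_{\alpha,n}$ directly from the linked trees and then transfer the tree separation of Lemma~\ref{lem:tree-sep} to them, losing only a constant factor in the process.

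\textbf{Step 1: identify where $\eta$ defects can live.} There are two sources of $\eta$ anyons.
\begin{itemize}
\item Bare physical and measurement errors create $\eta$ defects. These lie in the $1$-neighborhood of the nuggets $F_{n,\alpha}$, and hence in the $1$-neighborhood of the clusters of the corresponding linked tree.
\item Byproducts of ungauging and re-gauging create $\eta$ defects. As discussed under Boundary Detectors, these can only appear on or near a gauging domain wall, because the $\alpha_v$ terms are removed from, and then reinstated to, the stabilizer group only there.
\end{itemize}
Every ungauged region is opened by the just-in-time decoder in response to some linked tree $\mathbf{B}_n$. By the proof of Lemma~\ref{lem:isolated-chunk} applied to the tree, the ungauged spacetime region lies within the $4(3Q^n+2)$-neighborhood of $\mathbf{B}_n$, which has diameter at most $3Q^n$ by Lemma~\ref{lem:tree-size}.

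\textbf{Step 2: define the regions.} Set $\mathbf{A}_{\alpha,n}$ to be the $c(3Q^n+2)$-neighborhood of $\mathbf{B}_{\alpha,n}$, where $c$ is a small constant ($c=4$ or $5$, plus an additive $O(1)$ for detector locality). By Step 1, this set contains both the original $\eta$ defects and the byproduct $\eta$ defects associated with that tree. Theorem~\ref{th:just-in-time-decoding} guarantees that distinct trees are never merged, so each byproduct $\eta$ is attributed to a unique tree.

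One point needs checking. Byproduct $\eta$'s created at a domain wall are created in pairs, or are absorbed by the wall. I would argue that the total $\eta$ charge inside each $\mathbf{A}_{\alpha,n}$ is trivial. The reason is that the net effect of ungauging, local correction and re-gauging on that region is a local operator supported in $\mathbf{A}_{\alpha,n}$, applied to a state that was locally $\eta$-neutral. This neutrality is what an RG decoder needs in order to pair within each region.

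\textbf{Step 3: separation estimate.} Take distinct trees $\mathbf{B}_n$ and $\mathbf{B}_m$ with $m\ge n$. By Lemma~\ref{lem:tree-sep}, their distance is at least $\tfrac14 Q^{n+1}-2$. Subtracting the two neighborhood radii, at most $c(3Q^n+2)+c(3Q^m+2)$, gives a lower bound on $d(\mathbf{A}_{\alpha,n},\mathbf{A}_{\beta,m})$.

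This subtraction is the main obstacle. The radius of the larger region scales as $Q^m$, not $Q^n$. Lemma~\ref{lem:tree-sep} only separates $\mathbf{B}_n$ from the clusters of $\mathbf{B}_m$, not from $\mathbf{B}_m$'s full correction neighborhood.

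To handle this, I would use the fact that the correction region of $\mathbf{B}_m$ is built from the correction regions of its constituent clusters. Near a small cluster $B_{s,\eta}\in\mathbf{B}_m$, the ungauged region extends only about $2(Q^s+2)$ beyond the boundary of the absorbing region it links into. So I would redo the chain argument in the proof of Lemma~\ref{lem:tree-sep}, now carrying the extra $O(Q^{n})$ and $O(Q^{n-1})$ padding terms. In particular, the absorbing region of $\mathbf{B}_m$ must itself respect the $\tfrac13 Q^{n+1}$ buffer up to $O(Q^n)$ corrections, since by Theorem~\ref{th:linking-properties} any cluster closer than that would have linked.

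With these terms included, the required inequality takes the form
\begin{equation}
\tfrac14 Q^{n+1}-2-K(3Q^n+2)\ \ge\ \tfrac18 Q^{n+1}
\end{equation}
for a modest constant $K$ (roughly $K\approx 11$). This inequality holds once $Q>352$.

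\textbf{Step 4: conclude.} The regions $\{\mathbf{A}_{\alpha,n}\}$ then form a chunk-type decomposition with parameter $Q$. Each region is $\eta$-neutral, has diameter $O(Q^n)$, and is separated from regions of equal or larger level by $\tfrac18Q^{n+1}$. The standard RG decoder analysis of Ref.~\onlinecite{Bravyi_2013} therefore applies, and the global spacetime $\eta$ decoding at the end of the computation succeeds, with failure probability suppressed as in the threshold argument.
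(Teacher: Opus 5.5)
For the inequality the statement actually asserts, which compares two regions of the \emph{same} level $n$, your Steps 1--2 together with the case $m=n$, $c=4$ of Step 3 are exactly the paper's proof. There, $\mathbf{A}_{\alpha,n}$ is the $4(3Q^n+2)$-neighbourhood of $\mathbf{B}_{\alpha,n}$. It contains the error $\eta$'s, and it contains the gauging byproducts because every linked tree is corrected within that time. Subtracting both radii from Lemma~\ref{lem:tree-sep} gives $\frac14Q^{n+1}-8(3Q^n+2)-4>\frac18Q^{n+1}$, whose binding case $n=0$ is $Q>352$; at equal levels the $Q^m$-versus-$Q^n$ issue never arises. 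Where the proposal goes wrong is the rest of Step 3. Your final inequality with $K\approx 11$ does not hold for $Q>352$. At $n=0$ it reads $Q/8\ge 2+5K$, i.e.\ $Q\ge 16+40K\approx 456$; even $c=5$ at equal levels already forces $Q\ge 416$. So the threshold you quote does not follow from your own constants; only $K=8$, together with the $O(1)$ locality padding, gives $352$.

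The cross-level case you single out as the main obstacle is not part of the statement, and the paper does not prove it. Your sketch for it also does not go through. With your Step 2 definition, $\mathbf{A}_{\beta,m}$ has radius $4(3Q^m+2)>12Q^{n+1}$ when $m>n$, which swamps the $\frac14Q^{n+1}-2$ separation of Lemma~\ref{lem:tree-sep}. Your proposed repair is that anything within order $Q^{n+1}$ of the absorbing region of $\mathbf{B}_m$ would have linked. That misreads the Linking lemma: the linking radius is only $2(Q^n+2)$, which is smaller than the $4(3Q^n+2)$ radius of $\mathbf{B}_n$'s own $\eta$ region. Moreover, Lemma~\ref{lem:nugget-sep} only controls the distance to the \emph{errors} of $\mathbf{B}_m$, not to its ungauged volume. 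The re-gauging cap of that volume can lie a time of order $Q^m$ after the last of those errors (cf.\ the waiting times in Lemma~\ref{lem:isolated-chunk}). Consider a level-$n$ nugget placed $3(Q^n+2)$ above that cap. It respects nugget separation and is not linked, yet its $\eta$ neighbourhood contains the random $\alpha_v$ outcomes produced at the cap. So $d(\mathbf{A}_{\alpha,n},\mathbf{A}_{\beta,m})=0$ is possible, and no refinement of $\mathbf{A}_{\beta,m}$ into per-cluster paddings can drop the cap, since byproduct $\eta$'s are created there. You should either restrict Step 3 to $m=n$, as the statement and the paper do, or replace the cross-level claim with a genuinely different argument, for instance merging overlapping regions into a coarser hierarchy.
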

\begin{proof}
    The spacetime history of $\eta$ anyons will lie within the $4(3Q^n + 2)$ neighborhood of the linked trees $\{\mathbf{B}_n\}$. This holds since there are two processes that create $\eta$ anyons: (i) physical or measurement errors, which will by definition lie within the 1-neighborhood of the linked trees, and (ii) random outcomes from the gauging measurements, which must lie within the $4(3Q^n + 2)$ neighborhood, since by Theorem~\ref{th:just-in-time-decoding} all linked trees are corrected within this time. Denote the $4(3Q^n + 2)$ neighborhood of a linked tree $\mathbf{B}_n$ as $\mathbf{A}_n$.

    From Lemma \ref{lem:tree-sep} we know the minimum separation between linked trees. Thus, the minimum separation between $\mathbf{A}_n$ must be
    \begin{equation}
        \frac{1}{4}Q^{n+1} - 8(3Q^n + 2) - 4 > \frac{1}{8}Q^{n+1} \implies Q > 352
    \end{equation}
    
    So if $Q > 352$, we can ensure that the Abelian anyons satisfy a cluster decomposition. 
\end{proof}

The final step in our proof is to determine a lower bound on the error rate such that large error clusters are very rare. As proven in Ref.~\onlinecite{Bravyi_2013}, the probability of clusters of size $Q^n$ occurring is suppressed doubly-exponentially in $n$ given the error rate goes as $p \approx 1/(3Q)^6$. Thus we can ensure that the probability of clusters of size $Q^n \sim O(L)$ (implying $n$ goes as $O(\log L)$) are exponentially suppressed in system size $L$. Given $Q > 352$, we have that $p_{\mathrm{th}} \geq 7.2 \times 10^{-19}$. Thus, we have demonstrated that the threshold error rate of our error correction scheme is greater than zero.

\end{document}